\newcommand{\etal}{\emph{et al.}\xspace}
\newtheorem{proposition}{Proposition}[section]
\newcommand{\method}{ScalingNote\xspace}
\newcommand{\methoddual}{ScalingDual\xspace}
\newcommand{\methoddoc}{ScalingDoc\xspace}
\begin{document}

\title{\method: Scaling up Retrievers with Large Language Models for Real-World Dense Retrieval}
\renewcommand{\thefootnote}{} 
\author{Suyuan Huang*$^1$,\footnotetext{* Equal contribution}\footnotetext{$\dag$ Corresponding authors} Chao Zhang*$^{2,3}$, Yuanyuan Wu$^4$, Haoxin Zhang$^4$, Yuan Wang$^4$, Maolin Wang$^3$, Shaosheng Cao$^{\dag4}$, Tong Xu$^{\dag2}$, Xiangyu Zhao$^{\dag3}$, Zengchang Qin$^{\dag1}$, Yan Gao$^4$, Yunhan Bai$^4$, Jun Fan$^4$, Yao Hu$^4$, Enhong Chen$^{\dag2}$}
 \affiliation{%
 \institution{$^1$Beihang University}
 \country{}
 }
 \affiliation{
 \institution{$^2$University of Science and Technology of China \& State Key Laboratory of Cognitive Intelligence}
 \country{}
 }
 \affiliation{
 \institution{$^3$City University of Hong Kong}
 \country{}
 }
\affiliation{%
 \institution{$^4$Xiaohongshu Inc.}
 \country{}
 }
 \renewcommand{\shortauthors}{Suyuan Huang, Chao Zhang \etal}

\begin{abstract}

Dense retrieval in most industries employs dual-tower architectures to retrieve query-relevant documents.
Due to online deployment requirements, existing real-world dense retrieval systems mainly enhance performance by designing negative sampling strategies, overlooking the advantages of scaling up.
Recently, Large Language Models (LLMs) have exhibited superior performance that can be leveraged for scaling up dense retrieval.
However, scaling up retrieval models significantly increases online query latency.
To address this challenge, we propose \method, a two-stage method to exploit the scaling potential of LLMs for retrieval while maintaining online query latency.
The first stage is training dual towers, both initialized from the same LLM, to unlock the potential of LLMs for dense retrieval.
Then, we distill only the query tower using mean squared error loss and cosine similarity to reduce online costs.
Through theoretical analysis and comprehensive offline and online experiments, we show the effectiveness and efficiency of \method.
Our two-stage scaling method outperforms end-to-end models and verifies the scaling law of dense retrieval with LLMs in industrial scenarios, enabling cost-effective scaling of dense retrieval systems.
Our online method incorporating \method significantly enhances the relevance between retrieved documents and queries.

\end{abstract}


\keywords{Dense Retrieval; Large Language Model; Scaling Law; Knowledge Distillation}

\maketitle

\section{Introduction}
\begin{figure}[!t]
    \centering
    \includegraphics[width=0.45\textwidth]{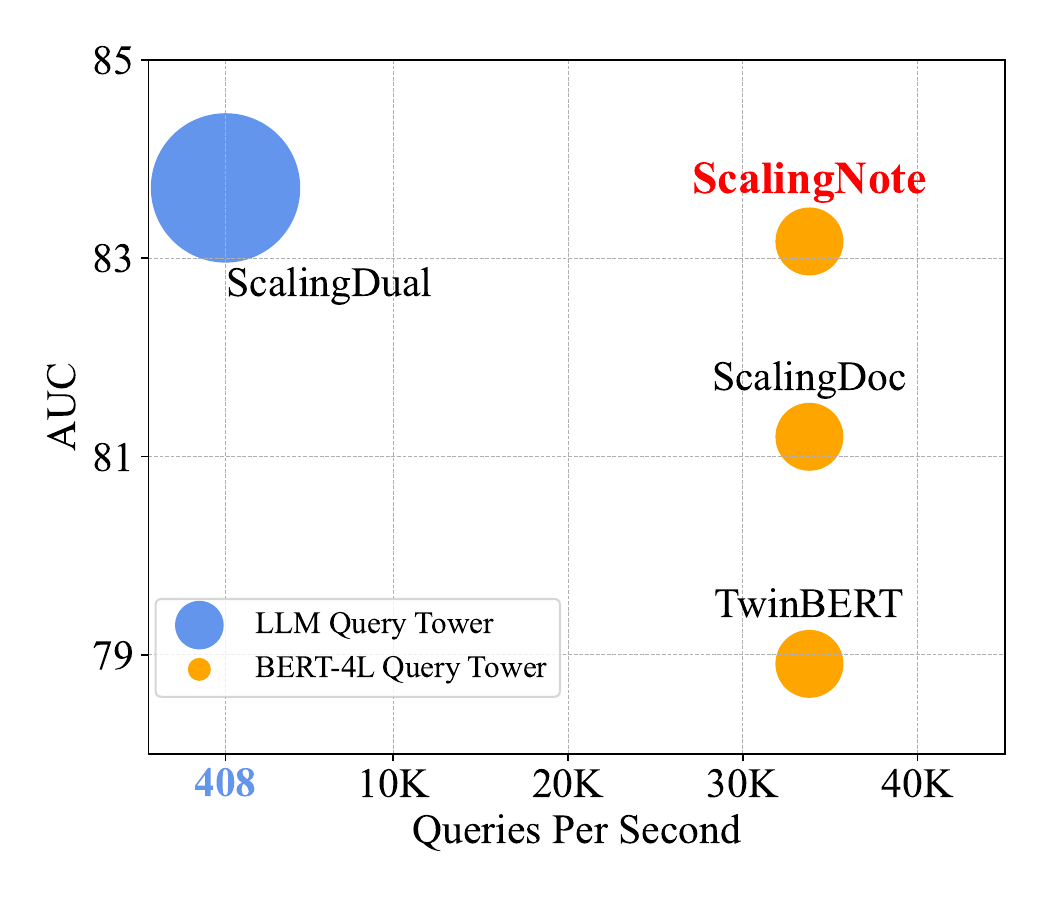}
    \caption{Performance and queries per second (QPS) comparison of scaling strategies for dual-tower dense retrieval. The circle sizes mean the number of parameters in query towers. Our \method gets high AUC without reducing QPS.}
    \label{fig:compare}
\end{figure}
\renewcommand{\thefootnote}{\arabic{footnote}}
Information retrieval platforms, such as Google\footnote{\url{www.google.com}}, Baidu\footnote{\url{www.baidu.com}}, and Xiaohongshu\footnote{\url{www.xiaohongshu.com}}, are important Internet applications that enable users to access relevant data from the vast amount of information available online~\cite{zhang2022uni,liu2021pre,huang2020embedding}.
Dense retrieval is a classic techniques in retrieval~\cite{karpukhin2020dense,ni2022large}. 
This method employs dual-tower architectures to encode both queries and documents into dense embeddings.
The semantic relevance between queries and documents is measured using embedding similarity.
With the rapid advancement of pre-training techniques, pre-trained models like BERT~\cite{kenton2019bert} and GPT~\cite{brown2020language} have greatly enhanced the semantic understanding capabilities of dense retrieval models, making dense retrieval a primary channel in industrial retrieval~\cite{huang2020embedding,liu2021que2search,he2023que2engage}.

In real-world retrieval systems, the massive number of real-time queries, combined with strict latency requirements, often limits the size of retrieval models to ensure quick responses~\cite{liu2021pre,he2023que2engage,magnani2022semantic}.
To prevent burdening deployment, there is a greater focus on refining negative sampling techniques~\cite{hofstatter2021efficiently,xu2022negative,xiongapproximate} to enhance model performance, while the potential advantages of scaling up model size are frequently underexplored.
Recently, decoder-only Large Language Models (LLMs) have demonstrated huge potential in natural language understanding and generation~\cite{zhao2023survey,xu2023large}. 
This advancement has also impacted dense textual retrieval~\cite{ma2024fine,liao2024d2llm,zhu2023large,behnamghader2024llm2vec,li2024llama2vec}.
Concurrently, scaling laws based on bi-directional encoders for dense textual retrieval have been verified~\cite{fang2024scaling}, indicating that more parameters and more data lead to more powerful retrieval models.
This all suggests a promising direction for scaling up dense retrieval with larger-scale decoder-only LLMs in real industry scenarios.

However, scaling up retrievers in industrial scenarios faces more challenges than existing research on scaling dense retrieval.
In industrial applications, different towers of retrievers serve different purposes.
Query towers process real-time queries from users, prioritizing low latency.
They typically have fewer parameters~\cite{liu2021que2search,kimustad,magnani2022semantic} or use high-level quantization~\cite{liu2021pre,huang2020embedding} to speed up online inference.
On the other hand, document towers generate embeddings for existing documents to update search indexes.
Since they operate offline, document towers can afford more parameters~\cite{liu2021que2search,he2023que2engage,magnani2022semantic} and complex computations~\cite{zhang2024notellm,zhang2024notellm2}.
However, storage of extensive document embeddings~\cite{lu2024knowledge} and the memory cost of caching them~\cite{liu2021pre} is a primary consideration.

In this work, we explore cost-effective methods for scaling up retrievers with LLMs for real-world dense textual retrieval.
Based on dual-tower architectures, we investigate two approaches to achieve retriever scaling.
The first approach, \methoddual, is scaling both the query and document towers simultaneously. 
This method fully leverages the potential of LLMs for dense retrieval. 
However, it is challenging to satisfy online query latency requirements.
Thus, we propose a second scaling approach, \methoddoc.
This method is scaling only the document tower with LLMs while maintaining the query tower's original structure. 
By enhancing document representation without compromising online query latency, this approach is better suited for online deployment.
In Figure~\ref{fig:compare}, we compare these two scaling methods using the same LLM, Qwen 2.5 7B~\cite{qwen2.5}.
We observe that although \methoddoc outperforms TwinBERT~\cite{lu2020twinbert} significantly, it still has a huge performance gap compared to \methoddual.
On the other hand, \methoddual lags behind in terms of Queries Per Second (QPS) and requires more memory to cache the query tower parameters.

For fully \textbf{Scaling} real-world dense retrieval and meeting online query latency requirements for \textbf{Note} scenarios in Xiaohongshu, we propose \method, a two-stage training method.
In the first stage, our framework trains dual towers, both initialized from the same LLM.
The goal of this stage is to fully unlock the potential of LLMs for dense textual retrieval.
Specifically, we utilize contrastive learning to recognize truly relevant documents from cross-device in-batch negatives. 
To further enhance the performance of LLM-based retrievers, we mine hard negatives using margin loss.
In the second stage, we propose Query-based Knowledge Distillation (QKD), which distills the knowledge from the query tower to an online query encoder based on queries. 
We utilize mean squared error (MSE)~\cite{wang2023query} and cosine similarity as loss functions.
We have also theoretically proven that \method can achieve a lower upper bound on generalization error compared to \methoddoc with the same size of online query tower and document tower.
Besides, we describe the workflow of our online system after integrating our model.
During the offline document index construction phase, each new document is encoded by our LLM-based document tower and generates semantic IDs according to the residual K-means~\cite{macqueen1967some} clustering distance.
The embeddings of semantic IDs are combined with additional features to generate the final document embeddings for constructing the IVFPQ index~\cite{jegou2010product}.
During the online query retrieval phase, each query is encoded by our query tower in real-time, and the system integrates user historical information to conduct a personalized search based on IVFPQ.
We conduct offline and online experiments to demonstrate the effectiveness and efficiency of \method and verify the scaling law of dense retrieval with decoder-only LLMs in industrial scenarios.

Our main contributions are summarized as follows:
\begin{itemize}[leftmargin=*]
    \item We scale up dense textual retrieval using LLMs for industrial scenarios and verify the scaling laws of dense retrieval.
    \item 
    We report lessons learned from scaling up dense retrieval for industrial applications.
    Based on these, we propose an effective two-stage scaling framework, \method. 
    We also prove our method's effectiveness theoretically.
    \item We conduct comprehensive experiments, both online and offline, to evaluate our approach.
    Our model achieves significant improvement and maintains low online latency.
\end{itemize}

\section{Problem Definition}
In a real-world retrieval system, there exists a large corpus of documents denoted as $\mathbb{C}=\{d_i\}_{i=1}^{|\mathbb{C}|}$, where 
$|\mathbb{C}|$ is typically in the billions.
Each documents $d_i$ is represented as a tuple $d_i=(t_i, tp_i, c_i)$, where $t_i$ is the title, $tp_i$ is the topic, and $c_i$ is the content.
The system receives the online query $q$ from the user.
Dense retrieval utilizes a relevance score to measure the similarity between a document $d$ and a given query $q$.
It employs a dual-tower architecture: a query tower $Q$ that encodes the given query $q$ into the query embedding $\boldsymbol{q}$, and a document tower $D$ that encodes the given document $d$ into the document embedding $\boldsymbol{d}$.
The relevance score $s$ is calculated as $s = sim(\boldsymbol{q},\boldsymbol{d})$, where $ sim(\boldsymbol{q},\boldsymbol{d})$ is typically the cosine similarity or dot product between the query and document embeddings.
Dense retrieval systems aim to identify and return the most query-relevant document subset $\mathbb{C}^q$ from the complete corpus $\mathbb{C}$ by ranking documents according to their relevance scores with the query $q$.
\section{Methodology}

\begin{figure*}[htbp]
    \centering
    \includegraphics[width=\textwidth]{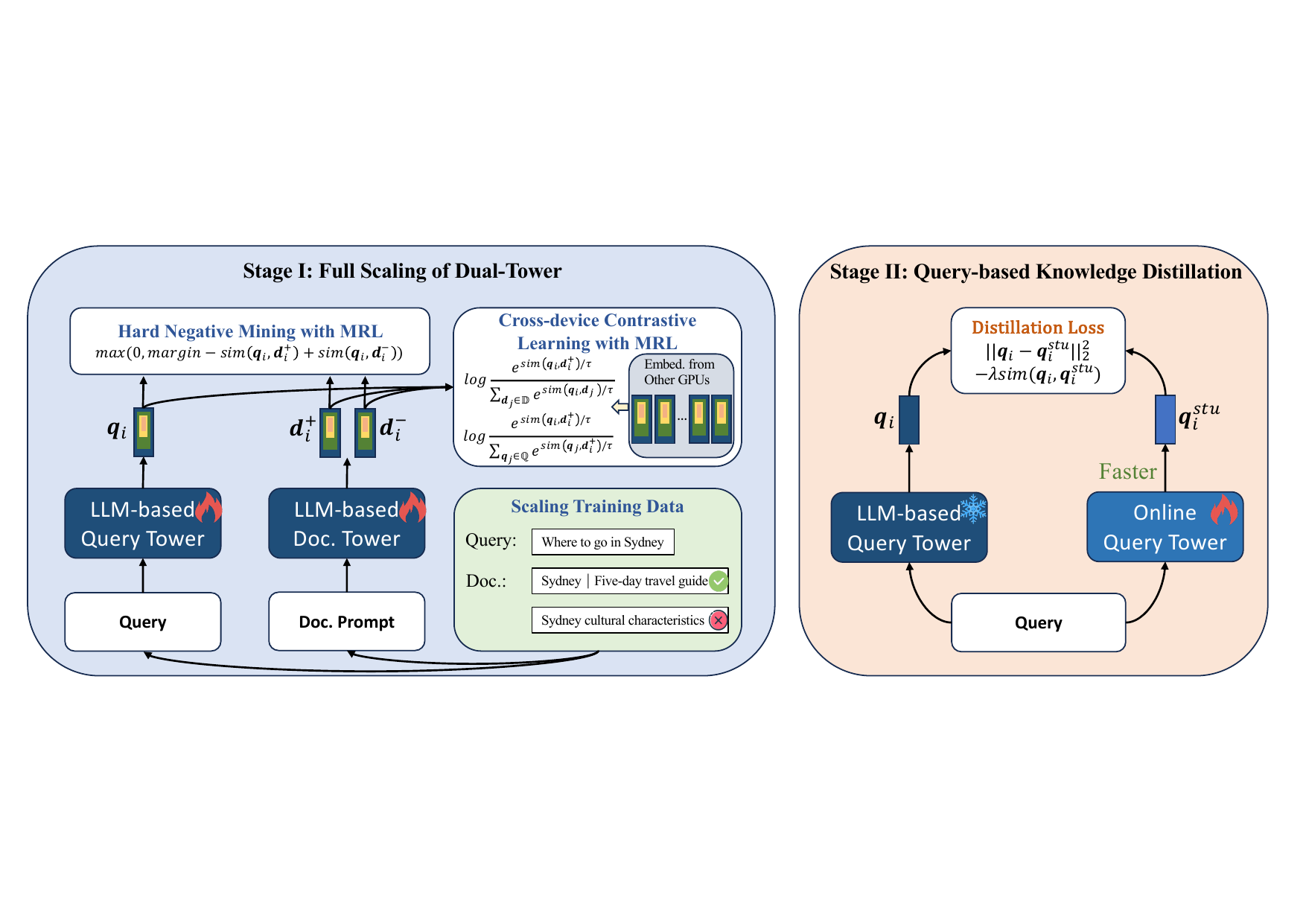}
    \caption{The framework of \method. The first stage is fully scaling the dual-tower using scaled training data, which learns through cross-device contrastive learning and hard negative mining.
    The second stage is query-based knowledge distillation (QKD), which transfers the scaled query knowledge from the LLM-based query tower to the faster online query tower.}
    \label{fig:framework}
\end{figure*}

\subsection{Overview of \method}
We present the overall pipeline of \method in Figure~\ref{fig:framework}.
Our approach consists of two stages.
The first stage maximizes the LLM's potential for dense retrieval.
We employ contrastive learning and hard negative mining techniques to enhance the LLM's retrieval capabilities.
The second stage optimizes online query efficiency.
We implement QKD to transfer query-related knowledge from the LLM-based query tower to a more efficient BERT-based query tower.
This two-stage approach allows \method to maintain the high precision offered by scaling while achieving online query efficiency comparable to existing methods.

\subsection{Stage I: Full Scaling of Dual-Tower}

\textbf{Prompt Design.} To leverage the textual understanding capabilities of LLMs, we need to design task-specific prompts.
While existing works often use compression prompts to condense all information into a single token via generation~\cite{zhang2024notellm,jiang2023scaling}, this approach faces two main challenges in our scenarios.
Firstly, documents often contain extensive textual information, including titles, topics, and content. 
Compressing all information into a single token is challenging and may lead to loss of details. 
This is particularly problematic in Xiaohongshu, where documents are represented primarily by titles in the document display interface.
If the compressed tokens focus mainly on content due to relative distance within the document, it may result in the platform showing less title-relevant documents, potentially reducing user engagement.
Secondly, relying solely on compressing information into a single token may not fully utilize LLMs.
LLMs' capabilities can be enhanced by using more tokens for more comprehensive information processing~\cite{pfau2024let,zhang2024simple}.

To solve the above challenges, we design the prompt as follows:
\begin{mdframed}
    \textbf{Document Prompt for Dense Retrieval} \\
    Document: \{'title': $t_i$\}. Predict a query term:"[TITLE\_QUERY]". Document: \{'topic': $tp_i$, 'content': $c_i$\}. Predict a query term: \\
    "[CONTENT\_QUERY]", combine the predicted query terms, and compress the above content into one word:"[EMB]".
\end{mdframed}
In this prompt, [TITLE\_QUERY], [CONTENT\_QUERY] and [EMB] are special placeholders. 
To address the issue of models potentially ignoring titles due to their distance from the compressed token, we set the LLM to predict the title query terms independently.
The predicted pseudo query terms from the title are then used for further optimization.
Besides, to utilize the reasoning and correction ability of LLMs, we propose a inductive prompt~\cite{pfau2024let,zhang2024simple} to guide the LLM to summarize the predicted query terms, and generate a higher-quality new token, represented by [EMB].

Due to the online cost of query towers, we propose making the prompt as short as possible.
Following~\cite{ma2024fine,wang2023improving}, we append the [EOS] token, which means the end of a sentence, directly after the queries.

\textbf{Encoding.} 
After inputting the real value into the prompt template, we tokenize $d_i$ into the discrete token list $T^{d_i}=[t^{d_i}_1,...,t^{d_i}_{|T^{d_i}|}]$ and feed the textual information into the document tower $D$ to get the last hidden states $\boldsymbol{H}^{d_i} \in \mathbb{R}^{|T^{d_i}|\times h} $:
\begin{equation}
\boldsymbol{H}^{d_i} = D([t^{d_i}_1,...,t^{d_i}_{|T^{d_i}|}]),
\end{equation}
where $h$ is the hidden state size of LLM $D$.
We take the previous last hidden state of the special placeholders [TITLE\_QUERY], [CONTENT\_QUERY], and [EMB] as $\boldsymbol{h}^{d_i}_{t} \in \mathbb{R}^h$, $\boldsymbol{h}^{d_i}_{c} \in \mathbb{R}^h$, and $\boldsymbol{h}^{d_i}_{e} \in \mathbb{R}^h$, respectively.
Similarly, the query $q_i$ is tokenized as $T^{q_i}=[t^{q_i}_1,...,t^{q_i}_{|T^{q_i}|}]$.
Then, the query tower $Q$ encodes the query $q_i$ as follows:
\begin{equation}
\boldsymbol{H}^{q_i} = Q([t^{q_i}_1,...,t^{q_i}_{|T^{q_i}|}]),
\end{equation}
where $\boldsymbol{H}^{q_i} \in \mathbb{R}^{|T^{q_i}|\times h} $, and we take the last position hidden state $\boldsymbol{h}^{q_i} \in \mathbb{R}^h$ as the query embedding.

To facilitate the future use of our representations by downstream applications with varying resource constraints, we adopt efficient MRL~\cite{kusupati2022matryoshka}.
We set the target vector dimension as $dim$, and the least dimension as $dim_{low}$.
Then, we obtain the candidate available vector dimensions $\mathbb{M} = \{dim_{low}, 2dim_{low},...,\frac{dim}{2}, dim\}$, where $|\mathbb{M}| \leq \left\lfloor \log(dim) \right\rfloor$ and $dim \ll h$.
We set the transform matrix for the document tower as $\boldsymbol{W}^D \in \mathbb{R}^{dim\times h}$, and the transform matrix for the query tower as $\boldsymbol{W}^Q \in \mathbb{R}^{dim\times h}$.
Therefore, we can get the maximum downstream embeddings:
\begin{equation}
\boldsymbol{d}^{t}_i = \boldsymbol{W}^D \boldsymbol{h}^{d_i}_t, \quad \boldsymbol{d}^{c}_i = \boldsymbol{W}^D \boldsymbol{h}^{d_i}_c, \quad \boldsymbol{d}^{e}_i = \boldsymbol{W}^D \boldsymbol{h}^{d_i}_e,
\end{equation}
\begin{equation}
\boldsymbol{q}_i = \boldsymbol{W}^Q \boldsymbol{h}^{q_i}.
\end{equation}
These embeddings are transformed into lower dimensions to reduce memory cost.
We can select any dimension $m \in \mathbb{M}$ and truncate these embeddings to form shorter representations, such as ${\boldsymbol{d}_{i}^{t}}{\scriptstyle[1:m]}$.

\textbf{Contrastive Learning.} 
We assume the presence of $K$ GPU devices, with each device processing a batch containing $B$ data triplets denoted as $\{\langle q_i,d_i^+,d_i^-\rangle\}_{i=1}^{B}$, where $q_i$ represents the query, $d_i^+$ is the positive document, and $d_i^-$ is the hard negative document. 
We will introduce the details of data construction in Section~\ref{data_construction}.
Upon completion of encoding by each GPU device, we utilize cross-device contrastive learning to expand the number of in-batch negatives.
Through the all-gather operation, each device obtains all query and document representations from other devices.
We denote the batch of all-gathered query representations as $\mathbb{Q}$, and and the batch of all-gathered document representations of three types as $\mathbb{D}^t$, $\mathbb{D}^c$, and $\mathbb{D}^e$.
For any $m\in \mathbb{M}$, we conduct contrastive learning for queries and three types of documents embeddings.
We take the contrastive learning for queries and document title as an example:
\begin{equation}
\begin{aligned}
L^{t,m}_{q2d}&=- \frac{1}{|\mathbb{Q}|} \sum_{i=1}^{|\mathbb{Q}|} log \frac{e^{sim(\boldsymbol{q}_i{\scriptstyle[1:m]}, \boldsymbol{d}_i^{t+}{\scriptstyle[1:m]})/\tau}}{\sum_{\boldsymbol{d}_j^{t}\in \mathbb{D}^t}e^{sim(\boldsymbol{q}_i{\scriptstyle[1:m]}, \boldsymbol{d}_j^{t}{\scriptstyle[1:m]})/\tau}},
\end{aligned}
\end{equation}
\begin{equation}
\begin{aligned}
L^{t,m}_{d2q}&=- \frac{1}{|\mathbb{D}^t|} \sum_{i=1}^{|\mathbb{D}^t|} log \frac{e^{sim(\boldsymbol{q}_i{\scriptstyle[1:m]}, \boldsymbol{d}_i^{t+}{\scriptstyle[1:m]})/\tau}}{\sum_{\boldsymbol{q}_j\in \mathbb{Q}}e^{sim(\boldsymbol{q}_j{\scriptstyle[1:m]}, \boldsymbol{d}_i^{t+}{\scriptstyle[1:m]})/\tau}},
\end{aligned}
\end{equation}
where $L^{t,m}_{q2d}$ is the loss for selecting the corresponding document title representation given a query, and $L^{t,m}_{d2q}$ is the loss for selecting the corresponding query representation given a document title embedding.
$sim()$ function is cosine similarity.
And $\tau$ is the temperature hyperparameter.
We then aggregate these two types of losses across different embedding dimensions $m$ as follows:
\begin{equation}
L^{t} = \frac{1}{2} \sum_{m\in \mathbb{M}} w^m (L^{t,m}_{q2d}+L^{t,m}_{d2q}),
\end{equation}
where $w^m$ are the hyperparameters to control the loss intensity of different dimensions.
Similarly, for the other two types of document representations, we can obtain $L^c$ and $L^e$.
Finally, we can calculate the overall contrastive learning loss:
\begin{equation}
L^{con} = L^{t} + L^{c} + L^{e}.
\end{equation}

\textbf{Hard Negative Mining.} 
Hard negatives are crucial for model performance, especially for top-position ranking~\cite{xiongapproximate,zhan2021optimizing}.
Therefore, we mine the hard negatives using margin loss~\cite{huang2020embedding}.
We take the loss for document title embeddings as an example.
For any $m\in \mathbb{M}$, we calculate the following margin loss:
\begin{equation}
\begin{aligned}
L^{t,m}_{hard} = \frac{1}{|\mathbb{Q}|} \sum_{i=1}^{|\mathbb{Q}|} max(0, & margin - sim(\boldsymbol{q}_i{\scriptstyle[1:m]}, \boldsymbol{d}_i^{t+}{\scriptstyle[1:m]}) \\
& + sim(\boldsymbol{q}_i{\scriptstyle[1:m]}, \boldsymbol{d}_i^{t-}{\scriptstyle[1:m]})),
\end{aligned}
\end{equation}
where $margin$ is a hyperparameter.
Next, we aggregate the hard margin loss as follows:
\begin{equation}
L^{t}_{hard} = \sum_{m\in \mathbb{M}} w^m_{hard} L^{t,m}_{hard},
\end{equation}
where $w^m_{hard}$ is the weight for each dimension $m$ and $L^{t}_{hard}$ is the loss for hard negatives related to the document title.
Similarly, we can also obtain the losses $L^{c}_{hard}$ and $L^{e}_{hard}$ for the other two types of document embeddings.
The overall hard negative margin loss is computed as follows:
\begin{equation}
L^{hard} = L^{t}_{hard} + L^{c}_{hard} + L^{e}_{hard}.
\end{equation}

The final training loss is a weighted sum:
\begin{equation}
L= L^{con} + \alpha L^{hard},
\end{equation}
where $\alpha$ is the hyperparameter to control the intensity of the impact of hard negatives.
We optimize the final loss to update both the query tower and document tower.

\subsection{Stage II: Query-based Knowledge Distillation}
In the first stage, we simultaneously optimize the query tower and the document tower.
However, the LLM-based query tower significantly impacts online query latency.
Therefore, it is necessary to reduce the online inference time by minimizing the model size.

Compared to documents, queries are shorter and contain less information. 
This makes knowledge transfer based on queries easier and more efficient.
Therefore, instead of using relevance scores from teacher models to guide student models in understanding the relationship between queries and documents~\cite{lin2021batch,kimustad}, which still requires loading the LLM-based document tower, we leverage embedding-based distillation~\cite{wang2023query,campos2023quick,kimustad}.
We adopt QKD which decouples the document tower and focuses solely on learning query representations from the query tower.
This approach only utilizes the query input.
We denote the student model of the query tower as $Q_{stu}$, which can have totally different model architecture and vocabularies.
The vocabulary can be augmented with frequently occurring new tokens specific to our platform.
We insert [CLS] token before the query $q_i$ and then tokenize it using the tokenizer of the student model as $T^{q_i}_{stu}=[t^{q_i}_1,...,t^{q_i}_{|T^{q_i}_{stu}|}]$.
Next, we encode the tokens using the student model:
\begin{equation}
\boldsymbol{H}^{q_i}_{stu} = Q_{stu}([t^{q_i}_1,...,t^{q_i}_{|T^{q_i}_{stu}|}]).
\end{equation}
We take the last hidden state of the [CLS] token as the query embedding from the student model $\boldsymbol{h}^{q_i}_{stu}\in \mathbb{R}^{h'}$.
$h'$ means the hidden state dimension of the student model $Q_{stu}$.
We utilize a matrix $\boldsymbol{W}^{Q_{stu}}\in \mathbb{R}^{dim\times h'}$ to reduce the dimension of student query embedding to the target dimension:
\begin{equation}
\boldsymbol{q}_i^{stu} = \boldsymbol{W}^Q_{stu} \boldsymbol{h}^{q_i}.
\end{equation}
In the first stage, our model measures distance using cosine similarity, which effectively captures the directional alignment of embeddings but overlooks the importance of magnitude.
To address this limitation, we also use MSE~\cite{wang2023query} to align the representations of the teacher and student models more comprehensively, considering both direction and magnitude.
The QKD loss is defined as:
\begin{equation}
L_{KD} = \frac{1}{|\mathbb{Q}|} \sum_{i=1}^{|\mathbb{Q}|} (  || \boldsymbol{q}_i-\boldsymbol{q}_i^{stu} ||_2^2 - \lambda sim(\boldsymbol{q}_i,\boldsymbol{q}_i^{stu})),
\end{equation}
where $\lambda$ is a hyperparameter.
After training in the second stage, the student query tower can closely mimic the teacher query tower's representations, so that it is compatible with the LLM-based document tower.

\subsection{Theoretical Analysis}
To further explain the effectiveness of \method, we provide a theoretical analysis comparing the generalization upper bounds of \methoddoc and \method when both of them have the same query and document parameter sizes.
We extend the analysis in \cite{kimustad}.
See Appendix~\ref{theory} for the proof.

First, we discuss the generalization bound of \methoddoc.
\begin{proposition}
\label{pro1}
Let $y\in \{0, 1\}$ be the label indicating the relevance of a query-document pair, where $1$ denotes relevance and $0$ denotes irrelevance.
Assume $\mathbb{S}_n=\{(q_i,d_i,y_i)\}_{i=1}^{n}$ is the training dataset.
Let $\ell$ be the overall loss, which is $L_{\ell}$-Lipschitz in its non-target variable.
Denote $\mathcal{Q}$ and $\mathcal{D}$ as the function classes for the query tower $Q_{D}$ and the document tower $D_{D}$, respectively. 
Define $s_{q,d}^{Q_{D},D_{D}}$ as the relevance score generated by the query tower $Q_{D}$ and the document tower $D_{D}$ for query $q$ and document $d$. 
The $L_2$-norm of the query and document embeddings is bounded by $K$ for each query and document tower in our function class.
Then,
\begin{equation}
\begin{aligned}
    \mathbb{E}_{q,d}\ell(s_{q,d}^{Q_{D},D_{D}}, y) &\leq R(s_{q,d}^{Q_{D},D_{D}};\mathbb{S}_n)  \\
    & +  \underbrace{\mathbb{E}_{\mathbb{S}_n} \frac{48KL_{\ell}}{\sqrt{n}} \int_0^{\infty} \sqrt{\log N(u, \mathcal{Q}) + \log N(u, \mathcal{D})}\ du }_{\text{Denoted as } U_{Q_{D},D_{D}}}.
\end{aligned}
\label{e2e_eq}
\end{equation}
$N(u, \cdot)$ is the $u$-covering number of a function class.
\end{proposition}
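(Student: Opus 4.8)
The plan is to bound the generalization gap via Rademacher complexity of the composed hypothesis class of relevance scores, then convert the Rademacher bound into a Dudley entropy integral over the covering numbers of the two towers. First I would fix the loss $\ell$ and, using its $L_\ell$-Lipschitz property in the non-target (score) argument together with the standard symmetrization inequality, obtain
\begin{equation}
\mathbb{E}_{q,d}\ell(s_{q,d}^{Q_D,D_D}, y) \le R(s_{q,d}^{Q_D,D_D};\mathbb{S}_n) + 2L_\ell\,\mathfrak{R}_n(\mathcal{S}),
\end{equation}
where $\mathcal{S}=\{(q,d)\mapsto s_{q,d}^{Q,D}: Q\in\mathcal{Q}, D\in\mathcal{D}\}$ is the class of score functions and $\mathfrak{R}_n$ is the empirical Rademacher complexity on $\mathbb{S}_n$. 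The contraction step is where the factor $L_\ell$ enters, and moving the expectation over $\mathbb{S}_n$ outside is what produces the $\mathbb{E}_{\mathbb{S}_n}$ in the final bound.

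Next I would control $\mathfrak{R}_n(\mathcal{S})$ through the Dudley entropy integral,
\begin{equation}
\mathfrak{R}_n(\mathcal{S}) \le \frac{12}{\sqrt{n}}\int_0^{\infty}\sqrt{\log N(u,\mathcal{S},\|\cdot\|_2)}\;du,
\end{equation}
so the task reduces to bounding the covering number $N(u,\mathcal{S})$ of the score class in terms of $N(u,\mathcal{Q})$ and $N(u,\mathcal{D})$. Here the key observation is that $s_{q,d}^{Q,D}=\mathrm{sim}(\boldsymbol q,\boldsymbol d)$ is (for cosine similarity or dot product on the bounded embedding ball of radius $K$) Lipschitz in each of its two embedding arguments with a constant governed by $K$: if $Q'$ is within $u$ of $Q$ and $D'$ within $u$ of $D$ in the relevant sup-norm, then $|s^{Q,D}_{q,d}-s^{Q',D'}_{q,d}|$ is $O(Ku)$. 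Consequently a product of a $cu$-net for $\mathcal{Q}$ and a $cu$-net for $\mathcal{D}$ (with an appropriate absolute rescaling of $c$) gives a $u$-net for $\mathcal{S}$, yielding $\log N(u,\mathcal{S}) \le \log N(c'u,\mathcal{Q}) + \log N(c'u,\mathcal{D})$. Using $\sqrt{a+b}\le\sqrt a+\sqrt b$ is avoided in favor of keeping the sum under a single square root (which is what the statement writes); a change of variables in the integral absorbs the constant $c'$ into the overall numeric factor, and the bookkeeping of the constants $12$, the factor $2$ from symmetrization/contraction, and the Lipschitz rescaling is what assembles into $48$.

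Finally I would combine the two displays, take $\mathbb{E}_{\mathbb{S}_n}$ of the entropy integral (the covering numbers are data-dependent in general; for a norm-bounded parametric class they are uniformly bounded, so the expectation is harmless and can be kept outside for generality), and identify the resulting expression with $U_{Q_D,D_D}$. I would also note that the integrand is eventually zero once $u$ exceeds the diameter of $\mathcal{S}$ (which is $O(K)$), so the improper integral converges. The main obstacle I anticipate is the covering-number factorization for the score class: one must argue carefully that the two towers enter $\mathrm{sim}(\cdot,\cdot)$ in a jointly Lipschitz way on the radius-$K$ ball — in particular handling cosine similarity's normalization, where the denominator is bounded away from zero only if embeddings are bounded away from the origin, or alternatively observing that cosine equals a dot product of unit vectors and that projection onto the sphere is $1$-Lipschitz away from $0$ — and then controlling how a joint net is built from the two marginal nets without the constant blowing up with $n$. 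Everything else (symmetrization, Talagrand contraction, Dudley's integral) is standard and follows the template of \cite{kimustad}, which we are extending.
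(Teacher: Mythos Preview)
Your proposal is correct in substance but takes a more elaborate route than the paper. The paper does not reprove the uniform deviation bound at all: it simply states Theorem~\ref{theorem1} (taken verbatim from \cite{kimustad}) which already gives
\[
\sup_{Q\in\mathcal{Q},\,D\in\mathcal{D}}\left|R(s_{q,d}^{Q,D};\mathbb{S}_n)-\mathbb{E}_{q,d}\ell(s_{q,d}^{Q,D},y)\right|
\le \mathbb{E}_{\mathbb{S}_n}\frac{48KL_\ell}{\sqrt{n}}\int_0^\infty\!\sqrt{\log N(u,\mathcal{Q})+\log N(u,\mathcal{D})}\,du,
\]
and then observes that dropping the absolute value and the supremum and rearranging immediately yields Eq.~\eqref{e2e_eq}. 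That is the entire proof: invoke the cited theorem, keep only the one-sided inequality $\mathbb{E}\ell - R \le U_{Q_D,D_D}$, and move $R$ to the right.

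What you outline---symmetrization, Talagrand contraction, Dudley's entropy integral, and factorizing the covering number of the score class $\mathcal{S}$ via the joint Lipschitz property of the similarity on the radius-$K$ ball---is exactly the machinery that sits \emph{inside} Theorem~\ref{theorem1} in \cite{kimustad}, so you are effectively reproving the cited black box rather than using it. Your sketch is sound (including the point about the factor $48$ arising from the product of $2$, $12$, and the Lipschitz rescaling, and the convergence of the integral since the integrand vanishes past the diameter), and it buys a self-contained argument that makes every constant and every assumption visible. The paper's approach buys brevity: the nontrivial content is delegated to the reference, and the proposition is then a two-line corollary.
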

In Eq.~\ref{e2e_eq}, $\mathbb{E}_{q,d}\ell(s_{q,d}^{Q_{D},D_{D}}, y)$ is the expected risk, which shows the generalization ability of dense retrival, and $R(s_{q,d}^{Q_{D},D_{D}};\mathbb{S}_n)$ is the emperical risk, which is the error in the training dataset.

Next, we discuss the expected risk of \method.
\begin{proposition}
\label{pro2}
Let $Q_R$ denote the first stage query tower, $D_R$ denote the first stage document tower, and $Q_{stu}$ denote the student query tower with the same architecture as the \methoddoc's query tower $Q_D$.
Therefore, the function classes of the student query tower $Q_{stu}$ is also $\mathcal{Q}$.
The student query tower $Q_{stu}$ shares a frozen document tower $D_R$ with the large query tower $Q_R$.
And $D_R$ has the same architecture as \methoddoc's document tower $D_D$.
Then,
\begin{equation}
\begin{aligned}
    \mathbb{E}_{q,d}\ell(s_{q,d}^{Q_{stu},D_R}, y) &\leq R(s_{q,d}^{Q_R,D_R};\mathbb{S}_n) + \frac{2K}{n}\sum_{i\in[n]} \|\boldsymbol{q}_i^{stu}-\boldsymbol{q}_i\|  \\
    & + \underbrace{\mathbb{E}_{\mathbb{S}_n} \frac{48KL_{\ell}}{\sqrt{n}} \int_0^{\infty} \sqrt{\log N(u, \mathcal{Q})}\ du}_{\text{Denoted as } U_{Q_{stu},D_R}}.
\end{aligned}
\label{2stage_eq}
\end{equation}
$\boldsymbol{q}^{stu}_i$ and $\boldsymbol{q}_i$ are the query embeddings generated by the student query tower $Q_{stu}$ and the first stage large query tower $Q_R$ for query $q_i$, respectively.
\end{proposition}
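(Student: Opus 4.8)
The plan is to adapt the covering-number (Dudley entropy integral) argument behind Proposition~\ref{pro1} to the two-stage setting, and then to pay for the gap between the student query tower $Q_{stu}$ and the teacher query tower $Q_R$ with the embedding distance that QKD explicitly drives down. The whole point is that in the second stage only the query tower is learned while $D_R$ is frozen, so the effective hypothesis class shrinks from $\mathcal{Q}\times\mathcal{D}$ to just $\mathcal{Q}$.

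\textbf{Step 1 (generalization bound with a frozen document tower).} First I would apply the same uniform-convergence machinery that yields Eq.~\ref{e2e_eq} to the hypothesis class $\mathcal{H}_{D_R}=\{(q,d)\mapsto s_{q,d}^{Q,D_R} : Q\in\mathcal{Q}\}$, obtained by letting the query tower range over $\mathcal{Q}$ while holding the document tower fixed at the stage-one model $D_R$. Since $\ell$ is $L_\ell$-Lipschitz in its score argument and the embeddings have $\ell_2$-norm at most $K$, the relevant Rademacher complexity is controlled by a Dudley integral of $\sqrt{\log N(u,\mathcal{H}_{D_R})}$. The key observation is that a single fixed function has covering number $1$, so the $\log N(u,\mathcal{D})$ contribution disappears and $\log N(u,\mathcal{H}_{D_R})\le\log N(u,\mathcal{Q})$ (with the $K$ rescaling already absorbed into the constant, exactly as in the proof of Proposition~\ref{pro1}). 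This gives
\begin{equation}
\mathbb{E}_{q,d}\ell(s_{q,d}^{Q_{stu},D_R}, y) \le R(s_{q,d}^{Q_{stu},D_R};\mathbb{S}_n) + U_{Q_{stu},D_R},
\end{equation}
i.e.\ the first and third terms of Eq.~\ref{2stage_eq}, with the entropy integral now over $\mathcal{Q}$ alone.

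\textbf{Step 2 (trading the student's empirical risk for the teacher's).} Rather than controlling $R(s_{q,d}^{Q_{stu},D_R};\mathbb{S}_n)$ directly, I would compare it term by term with the teacher's empirical risk $R(s_{q,d}^{Q_R,D_R};\mathbb{S}_n)$. Because $Q_{stu}$ and $Q_R$ share the document tower $D_R$, Lipschitzness of $\ell$ together with the fact that $s_{q,d}=sim(\boldsymbol q,\boldsymbol d)$ is bilinear (dot product) or $1$-Lipschitz (cosine) in $\boldsymbol q$ with $\|\boldsymbol d\|\le K$ yields, for each training pair,
\begin{equation}
\ell(s_{q_i,d_i}^{Q_{stu},D_R},y_i) - \ell(s_{q_i,d_i}^{Q_R,D_R},y_i) \le L_\ell\,\bigl| sim(\boldsymbol q_i^{stu},\boldsymbol d_i) - sim(\boldsymbol q_i,\boldsymbol d_i) \bigr| \le c\,K\,\|\boldsymbol q_i^{stu}-\boldsymbol q_i\|,
\end{equation}
where the constant $c$ collects the Lipschitz factor and the symmetric (q2d and d2q) appearance of the score inside $\ell$, producing the stated coefficient $2K$. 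Averaging over $i$ gives $R(s_{q,d}^{Q_{stu},D_R};\mathbb{S}_n)\le R(s_{q,d}^{Q_R,D_R};\mathbb{S}_n)+\frac{2K}{n}\sum_{i\in[n]}\|\boldsymbol q_i^{stu}-\boldsymbol q_i\|$, and substituting into Step~1 gives Eq.~\ref{2stage_eq}. Note that the middle term is, up to scaling, the MSE component of $L_{KD}$, so Stage~II is exactly the optimization that keeps this bound tight.

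\textbf{Main obstacle.} The manipulations in Step~2 are routine Cauchy--Schwarz/Lipschitz bookkeeping; the delicate point is Step~1 — verifying that the chaining/symmetrization argument of \cite{kimustad}, stated for \emph{two} learnable towers, remains valid when one tower is specialized to a singleton, so that the document covering number is genuinely eliminated at every step (covering, chaining, symmetrization) and the constants ($48$, $K$, $L_\ell$) propagate unchanged. Once this is settled, the comparison with \methoddoc is immediate: $\sqrt{\log N(u,\mathcal{Q})}\le\sqrt{\log N(u,\mathcal{Q})+\log N(u,\mathcal{D})}$ gives $U_{Q_{stu},D_R}\le U_{Q_D,D_D}$, the larger stage-one model has no larger empirical risk than the smaller \methoddoc model on $\mathbb{S}_n$, and the residual distillation term is small after QKD — hence \method attains the smaller generalization upper bound.
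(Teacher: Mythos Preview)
Your proposal is correct and follows the paper's proof essentially step for step: first invoke the uniform-deviation bound of \cite{kimustad} with the document class collapsed to the singleton $\{D_R\}$, which removes the $\log N(u,\mathcal{D})$ term and yields $\mathbb{E}_{q,d}\ell\le R(s_{q,d}^{Q_{stu},D_R};\mathbb{S}_n)+U_{Q_{stu},D_R}$; then add and subtract $R(s_{q,d}^{Q_R,D_R};\mathbb{S}_n)$ and bound the resulting difference $L_{\text{distill}}$ via Cauchy--Schwarz together with $\|\boldsymbol d_i\|\le K$. The one place your write-up diverges is the source of the coefficient $2K$: the paper first passes to the point-wise surrogate $\ell(s,y)=(1-y)s+\gamma(-s)$ with $\gamma$ the softplus, and the factor $2$ comes from bounding the linear part by $(1-y_i)\le 1$ and the $1$-Lipschitz $\gamma$ part by another $1$, giving $(2-y_i)K\le 2K$ --- not from a symmetric q2d/d2q pairing.
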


\textbf{Analysis.} We can compare the upper bound of $\mathbb{E}_{q,d}\ell(s_{q,d}^{Q_D,D_D}, y)$ in Eq.~\ref{e2e_eq} and $\mathbb{E}_{q,d}\ell(s_{q,d}^{Q_{stu},D_R}, y)$ in Eq.~\ref{2stage_eq}.
Based on existing works~\cite{fang2024scaling,campos2023quick,ni2022large} and Figure~\ref{fig:compare}, larger and more robust models can achieve lower empirical risk on the training dataset with the same training settings.
Thus, $R(s_{q,d}^{Q_R,D_R};\mathbb{S}_n) \leq R(s_{q,d}^{Q_D,D_D};\mathbb{S}_n)$.
Besides, due to $ N(u, \mathcal{D})\geq 1$, we have $U_{Q_{stu},D_R} \leq U_{Q_D,D_D}$.
The factor affecting the upper bound of expected risk for \method is the embedding distance $\frac{2K}{n}\sum_{i\in[n]} \|\boldsymbol{q}_i^{stu}-\boldsymbol{q}_i\|$.
This distance is easily reduced to a low value.
Consequently, \method can achieve a lower upper bound of expected risk than \methoddoc.

\subsection{Online Serving}

\begin{figure}[t]
    \centering
    \includegraphics[width=0.5\textwidth]{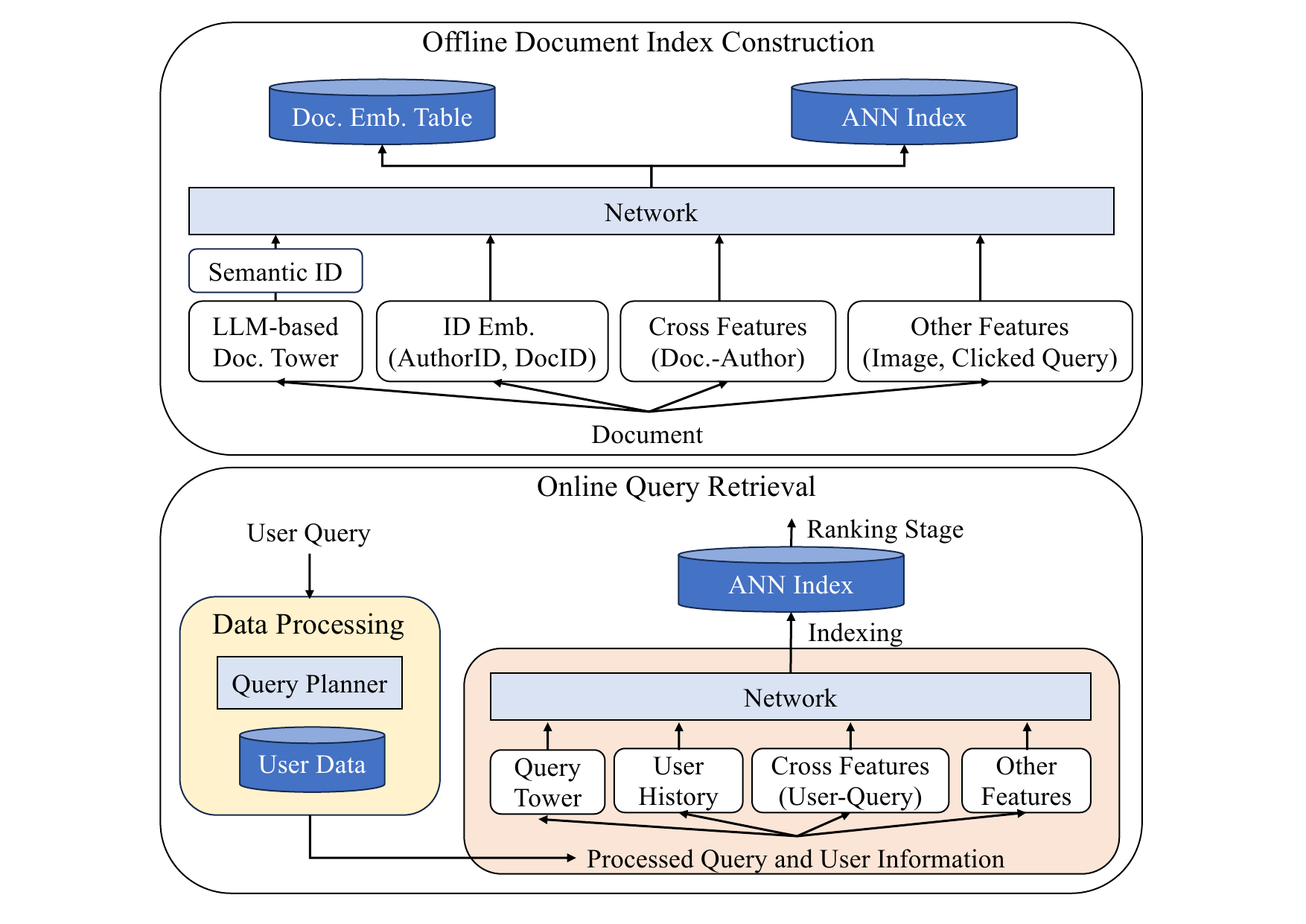}
    \caption{The online framework includes two stages: offline document index construction and online query retrieval. }
    \label{fig:online_framework}
\end{figure}

In this subsection, we introduce the online serving algorithm incorporating LLM-based dense retrieval. 
Our system operates in two stages: offline document index construction and online query retrieval, as illustrated in Figure~\ref{fig:online_framework}.
During the offline stage, our system encodes document-side information.
We first use LLM-based document tower to encode documents from a data pool. 
Then, we employ 6-layer residual K-means~\cite{macqueen1967some} clustering.
Every time the system encodes a new document, the LLM-based document tower is used to generate the embedding and cluster it into specific clustering centers. 
We assign the clustering center IDs as the semantic IDs.
We use retrained semantic ID embeddings to represent the semantic information of the document.
Besides, our whole model incorporates other document features, such as ID embeddings, cross features, images, etc.
All inputs are processed through a neural network to generate comprehensive document embeddings. 
The final embeddings are then stored in document embedding tables and used to construct ANN indexes based on IVFPQ~\cite{jegou2010product}.

\begin{figure*}[htbp]
    \centering
    \includegraphics[width=\textwidth]{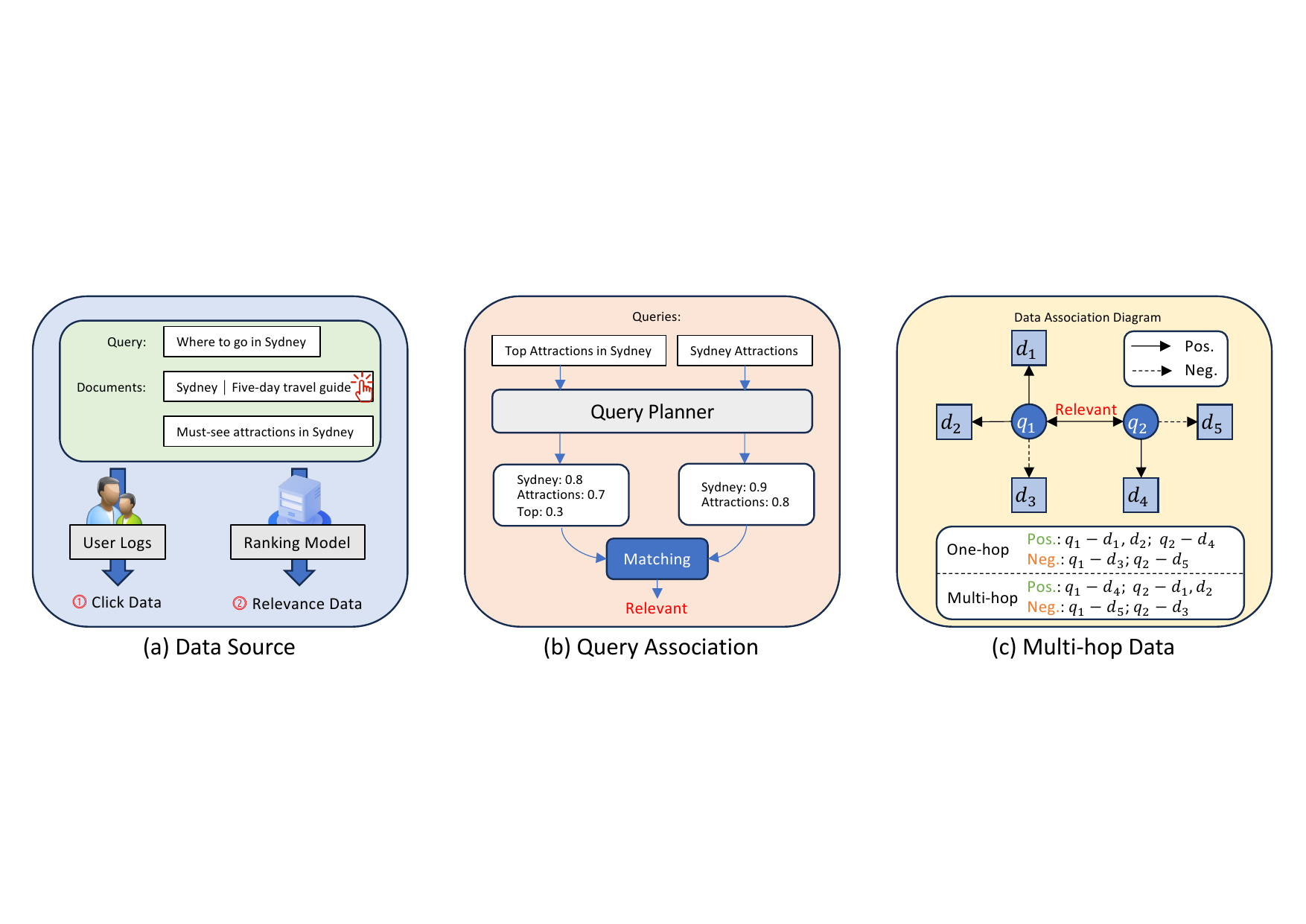}
    \caption{Key information about data collection includes: (a) two sources of query-document pairs: user click behaviors and ranking model evaluations. (b) query association construction method. (c) an example of a multi-hop data sample.}
    \label{fig:data_pipeline}
\end{figure*}

In online query retrieval, once the system receives a query from a user, the content of the query will be processed by the Query Planner, while the user ID will be used to retrieve the recent user's personalized information.
All processed data is sent to real-time servers to complete encoding.
The overall query features come from three sources.
The first source is the query information, including embeddings from our query tower and intent analysis of the query.
The second source is the user information, including the profile, historical behavior, etc.
The third source is the cross-features of query and user.
The query side also takes a neural network to merge all this data into the final embedding.
This final embedding searches for the most relevant documents using IVFPQ.
Retrieved documents are then transferred to the ranking stage.

\section{Experiments}

\subsection{Dataset}
\subsubsection{Training Data}
\ 
\label{data_construction}

\textbf{Data Collection.}
As shown in Figure~\ref{fig:data_pipeline} (a), in training data collection, we gather query-document pairs from two primary sources: click data and relevance data.
For click data, clicked documents are considered positive examples for the associated query. 
From the exposed but not clicked documents, we randomly select one as a hard negative example.
For relevance data, we utilize a high-accuracy cross-encoder ranking model~\cite{ma2024fine,liu2021pre} on our platform to measure the relevance between queries and documents.
For each query, the document with the highest relevance score is selected as the positive example.
To select the hard negatives, we first filter out the top-$K$ relevant documents to prevent false negative samples.
Then, we randomly choose a document ranked between top-$K$ and top-$T$ as the hard negative.

To increase data volume and improve relevance, we also consider multi-hop data relevance based on query association.
For query association, as shown in Figure~\ref{fig:data_pipeline} (b), we use a query planner from our platform to score term importance in queries.
Then, we utilize the important terms in queries to determine if there is a click association or relevance association, as shown in Algorithm~\ref{alg1}.
After computing query associations, we can construct the multi-hop data.
When $ClickA_{q_1\leftarrow q_2}$ is true, the click positive documents and click hard negative documents of query $q_2$ are also associated with query $q_1$.
Similarly, if $RelA$ is true, the relevance positive documents and relevance hard negative documents are associated bidirectionally between query $q_1$ and $q_2$.
Figure~\ref{fig:data_pipeline} (c) shows the difference between one-hop and multi-hop data.

\begin{algorithm}[t]
\caption{Query Association Matching}
\label{alg1}
\SetAlgoLined
\KwIn{$q_1, q_2, Planner, th_{click}, th_{rel}$}
\KwOut{$ClickA_{q_1\leftarrow q_2}, ClickA_{q_2\leftarrow q_1}, RelA$}

\textbf{Functions:}
\begin{itemize}[leftmargin=*]
    \item $Planner(q_1):$ \Return term-importance score dictionary
    \item $\text{Top2}(Scores)$: \Return top 2 scoring terms based on $Scores$
    \item $\text{Top4}(Scores)$: \Return top 4 scoring terms based on $Scores$
    \item $\text{AllAbove}(Terms, Scores, Threshold)$: \Return True if all $Term \in Terms$ have $Scores[Term]> Threshold$ else False
\end{itemize}

\textbf{Main Algorithm:}

\Indp
\nl $S_1 \leftarrow Planner(q_1)$\;
\nl $S_2 \leftarrow Planner(q_2)$\;

\nl \tcc{Click association for $q_1$ to $q_2$}
\nl $T_1^2 \leftarrow \text{Top2}(S_1)$\;
\nl $ClickA_{q_1\leftarrow q_2} \leftarrow \text{AllAbove}(T_1^2, S_2, th_{click})$\;

\nl \tcc{Click association for $q_2$ to $q_1$}
\nl $T_2^2 \leftarrow \text{Top2}(S_2)$\;
\nl $ClickA_{q_2\leftarrow q_1} \leftarrow \text{AllAbove}(T_2^2, S_1, th_{click})$\;

\nl \tcc{Relevance association}
\nl $T^4_1 \leftarrow \text{Top4}(S_1)$\;
\nl $T^4_2 \leftarrow \text{Top4}(S_2)$\;
\nl$RelA \leftarrow \text{AllAbove}(T^4_1, S_2, th_{rel}) \land \text{AllAbove}(T^4_2, S_1, th_{rel})$\;

\nl \Return $ClickA_{q_1\leftarrow q_2}, ClickA_{q_2\leftarrow q_1}, RelA$\;
\Indm
\end{algorithm}

Based on the above data collection, we can obtain datasets from four sources: click one-hop, click multi-hop, relevance one-hop and relevance multi-hop.
We merge all datasets as $\mathbb{T} = \{\langle q_i,d_i^+,d_i^-\rangle\}_{i=1}^{|\mathbb{T}|}$.

\textbf{Training Datasets.}
We employ a three-tiered approach using \textbf{Small}, \textbf{Median}, and \textbf{Large} datasets. 
The training and validation datasets are derived from the original datasets in a $9:1$ ratio.
We show the details of three datasets in Table~\ref{tab:train_datasets}.
Besides, all experiments on QKD are conducted with Large datasets because the training cost for the smaller student query towers is low.

\begin{table}[!t]
    \centering
    
    \setlength\tabcolsep{2mm}
    \renewcommand\arraystretch{1.05}
    \caption{Detailed statistics of training dataset.}
    \scalebox{0.8}{
    \begin{tabular}{l|r|l|r} \Xhline{1.0pt}
    \multicolumn{4}{c}{\textbf{Small training dataset}} \\
    \Xhline{1.0pt}
    \# documents & 1,642,559 & \# query-document pairs & 899,693 \\

    avg. \# words per title & 13.86 & avg. \# topic words per doc. & 25.18 \\

    avg. \# words per content & 200.33 & avg. \# words per query & 16.05  \\
    
    \Xhline{1.0pt}
    \multicolumn{4}{c}{\textbf{Median training dataset}} \\
    \Xhline{1.0pt}
    \# documents & 26,263,129 & \# query-document pairs & 19,638,050 \\

    avg. \# words per title & 13.56 & avg. \# topic words per doc. & 25.86\\

    avg. \# words per content & 180.21 & avg. \# words per query & 15.56 \\
    \Xhline{1.0pt}
    \multicolumn{4}{c}{\textbf{Large training dataset}} \\
    \Xhline{1.0pt}
    \# documents & 88,458,365 & \# query-document pairs & 117,828,000 \\

    avg. \# words per title & 13.58 & avg. \# topic words per doc. & 25.69\\

    avg. \# words per content & 182.26 & avg. \# words per query & 15.57 \\
    
    \Xhline{1.0pt}
    \end{tabular}}
    \label{tab:train_datasets}
\end{table}

\subsubsection{Testing Data}
\ 

\textbf{Data Collection.}
To evaluate the models, we create a test dataset following the same data collection process, but with data collected one month after the training dataset.
We add additional non-paired documents to the test dataset to simulate a realistic retrieval pool.
We call this test dataset the \textbf{Recall} test dataset.

While the Recall test dataset is useful for identifying documents potentially related to queries, it is often noisy and may not always accurately reflect true relevance.
Therefore, we create a manually curated dataset that can serve as a more reliable benchmark for assessing query-document relevance.
We collect query-document pairs following the same data collection process used for the Recall test dataset. Expert reviewers then evaluate these pairs based on two key criteria: satisfaction and relevance.
For each pair, experts assign a binary score: $0$ for unsatisfactory or irrelevant, and $1$ for satisfactory or relevant.
We call this dataset the \textbf{Manual} test dataset.

\textbf{Testing Datasets.}
We collect queries that do not appear in the training dataset, and show the details of two types of testing datasets in Table~\ref{tab:test_datasets}.

\textbf{Evaluation Settings.}
In the Recall dataset, we utilize the recall metric $R@K$ to evaluate the effectiveness of models.
Specifically, for each query, we rank all documents in the document pool based on their relevance scores.
We consider a retrieval successful at rank $K$ if the query-relevant ground truth document appears within the top-$K$. 
We denote the number of successful retrievals at rank $K$ across all queries as $Success@K$.
After processing all queries, we calculate $R@K$ as $R@K = \frac{Success@K}{\# Queries}$.
We report $R@50$, $R@100$, $R@500$, and $R@1000$ on the Recall dataset. 
In the Manual dataset, we compute similarity scores for all query-document pairs in the dataset, and rank all pairs based on these similarity scores.
We calculate AUC-SAT and AUC-REL using human-annotated satisfaction and relevance labels, respectively, along with computed rankings.

\begin{table}[!t]
    \centering
    
    \setlength\tabcolsep{2mm}
    \renewcommand\arraystretch{1.05}
    \caption{Detailed statistics of testing dataset.}
    \scalebox{0.8}{
    \begin{tabular}{l|r|l|r} \Xhline{1.0pt}
    \multicolumn{4}{c}{\textbf{Recall testing dataset}} \\
    \Xhline{1.0pt}
    \# documents & 1,007,166 & \# query-document pairs & 15,370 \\

    avg. \# words per title & 11.80 & avg. \# topic words per doc. & 21.43\\

    avg. \# words per content & 157.00 & avg. \# words per query & 15.92  \\
    
    \Xhline{1.0pt}
    \multicolumn{4}{c}{\textbf{Manual testing dataset}} \\
    \Xhline{1.0pt}
    \# documents & 14,448 & \# query-document pairs & 14,458 \\

    avg. \# words per title & 14.05 & avg. \# topic words per doc. & 24.81\\

    avg. \# words per content & 242.90 & avg. \# words per query & 15.34 \\
    \# satisfactory q.-d. pairs &9873 & \# unsatisfactory q.-d. pairs & 4586 \\
    \# relevant q.-d. pairs &9389 & \# irrelevant q.-d. pairs & 5070 \\
    \Xhline{1.0pt}
    \end{tabular}}
    \label{tab:test_datasets}
\end{table}

\subsection{Implementation Details}
We use Qwen-2.5 $7B$ model~\cite{qwen2.5} to scale up the dense retrieval models.
In our experiments, we use cosine similarity as the $sim()$ function for better training stability.
We set $dim=128$, $dim_{low}=16$, the overall batch size (combined batch size across all GPUs) as $480$, all $w^m$ and $w^m_{hard}$ as $1$, $margin=0.2$, $\alpha=0.5$, and $\lambda=1$.
The original query tower is based on the BERT architecture with just four layers and integrates an additional vocabulary from Xiaohongshu.
We presented the average results of three tests for reliability.
For more training and model details, please check Appendix~\ref{training_details}.

\begin{table}[!t]
    \centering
    
    \setlength\tabcolsep{4pt}
    \renewcommand\arraystretch{1}
    \caption{AUC performance of different methods trained on training datasets of different scales on the Manual test dataset (\%). ``\textbf{{\Large *}}'' indicates the statistically significant improvements (i.e., two-sided t-test with $p<0.05$) over \methoddoc.}
    \scalebox{0.85}{
    \begin{tabular}{c|l|ccc|c} 
    \Xhline{1.0pt}
    \textbf{Datasets}&\textbf{Metric}&TwinBERT&\methoddoc&\method&\methoddual \\
    \Xhline{0.5pt}
    \multirow{2}*{\textbf{Small}}&AUC-SAT&70.67&74.76&\textbf{77.58}*&78.89 \\
    ~&AUC-REL&69.24&73.16&\textbf{76.29}*&77.14 \\
    \Xhline{0.5pt}
    \multirow{2}*{\textbf{Median}}&AUC-SAT&76.54&80.00&\textbf{80.96}*&81.14 \\
    ~&AUC-REL&74.75&79.26&\textbf{79.61}*&80.28 \\
    \Xhline{0.5pt}
    \multirow{2}*{\textbf{Large}}&AUC-SAT&78.91&81.20&\textbf{83.01}*&83.71 \\
    ~&AUC-REL&77.76&80.48&\textbf{82.14}*&82.91 \\
    \Xhline{1.0pt}
    \end{tabular}}
    \label{tab:overall_auc}
\end{table}

\subsection{Overall Performance}
We compare the following methods to validate the efficacy of our \method.
\begin{itemize}[leftmargin=*]
    \item \textbf{TwinBERT}~\cite{lu2020twinbert} is a dense retrieval method that adopts the dual BERT~\cite{kenton2019bert} architecture, which is a traditional online method.
    \item \textbf{\methoddoc} is a scaling strategy that only scales the document tower while maintaining the query tower.
    \item \textbf{\methoddual} is a scaling strategy that fully scales the query and document towers with LLMs. This is an upper bound method, which is hard to deploy online.
\end{itemize}

This result is shown in Table~\ref{tab:overall_auc} and Table~\ref{tab:overall_recall}.
Firstly, we observe that all scaling methods show significant improvement compared to TwinBERT, as scaling with LLMs both enhances the retriever's knowledge and provides more parameters, enabling better generalization across diverse queries and documents.
Secondly, \methoddual shows superior performance compared to \methoddoc.
This is because the limited size of the query tower in \methoddoc restricts the full scalability of the LLM-based document tower.
Besides, \method outperforms \methoddoc, while \methoddoc has the same number of parameters as \method.
And \method has only a small performance gap with \methoddual.
This shows that \method not only fully realizes the scaling potential of the LLM-based document tower but also maintains the size of the query tower that requires online inference.

\begin{table*}[!h]
    \centering
    
    \setlength\tabcolsep{3pt}
    \renewcommand\arraystretch{1}
    \caption{Recall performance of different methods trained on training datasets of different scales on the Recall test dataset (\%). 
    ``\textbf{{\Large *}}'' indicates the statistically significant improvements (i.e., two-sided t-test with $p<0.05$) over \methoddoc.}
    \scalebox{0.8}{
    \begin{tabular}{l|ll|cccc|cccc|cccc} 
    \Xhline{1.0pt}
    \multirow{2}*{\textbf{Method}}&\multirow{2}*{\textbf{Query Tower}}&\multirow{2}*{\textbf{Doc. Tower}}&\multicolumn{4}{c|}{\textbf{Small}}&\multicolumn{4}{c|}{\textbf{Median}}&\multicolumn{4}{c}{\textbf{Large}} \\
    \cline{4-15}
    ~&~&~&R@50&R@100&R@500&R@1k&R@50&R@100&R@500&R@1k&R@50&R@100&R@500&R@1k \\
    \Xhline{0.5pt}
    TwinBERT&RED BERT-4L (29M)~\cite{kenton2019bert}&zh BERT-12L (86M)~\cite{kenton2019bert}& 6.54&10.14&23.19&31.14& 21.25 & 28.65 & 48.79 & 58.42 & 38.64 & 45.23 & 60.81 & 71.25 \\
    \methoddoc&RED BERT-4L (29M)~\cite{kenton2019bert}&Qwen2.5-Ins. (7B)~\cite{qwen2.5}& 27.33 & 36.82 & 59.11 & 67.59 & 41.86 & 52.89 & 74.52 & 81.68 & 47.19 & 58.54 & 78.36 & 84.37\\
    \method&RED BERT-4L (29M)~\cite{kenton2019bert}&Qwen2.5-Ins. (7B)~\cite{qwen2.5}& \textbf{33.62}* & \textbf{43.81}*& \textbf{65.85}*& \textbf{74.67}*& \textbf{47.90}*& \textbf{59.14}*& \textbf{79.93}*& \textbf{86.01}*& \textbf{54.57}* & \textbf{66.07}*& \textbf{85.00}* & \textbf{90.02}* \\
    \Xhline{0.5pt}
    \methoddual&Qwen2.5-Ins. (7B)~\cite{qwen2.5}&Qwen2.5-Ins. (7B)~\cite{qwen2.5}& 35.66 & 45.21 & 66.75 & 75.48 & 49.48 & 60.37 & 81.00 & 86.79 & 55.15 & 66.92 & 85.63 & 90.62 \\
    \Xhline{1.0pt}
    \end{tabular}}
    \label{tab:overall_recall}
\end{table*}

\begin{figure*}[!t]
    \centering
    \subfigure[Scaling Law for Model Size]{
    \includegraphics[width=0.32\textwidth]{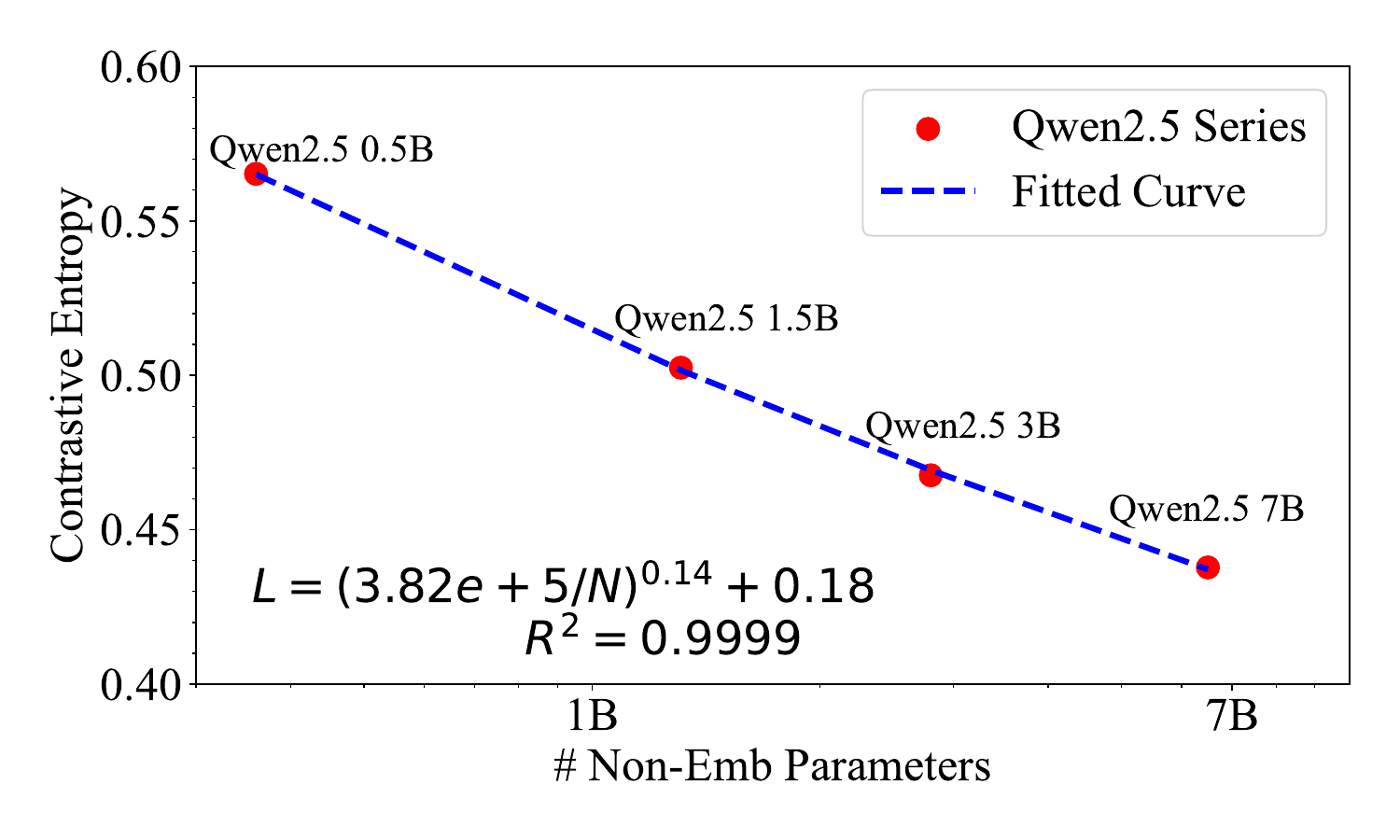}}
    \subfigure[Scaling Law for Data Size]{
    \includegraphics[width=0.32\textwidth]{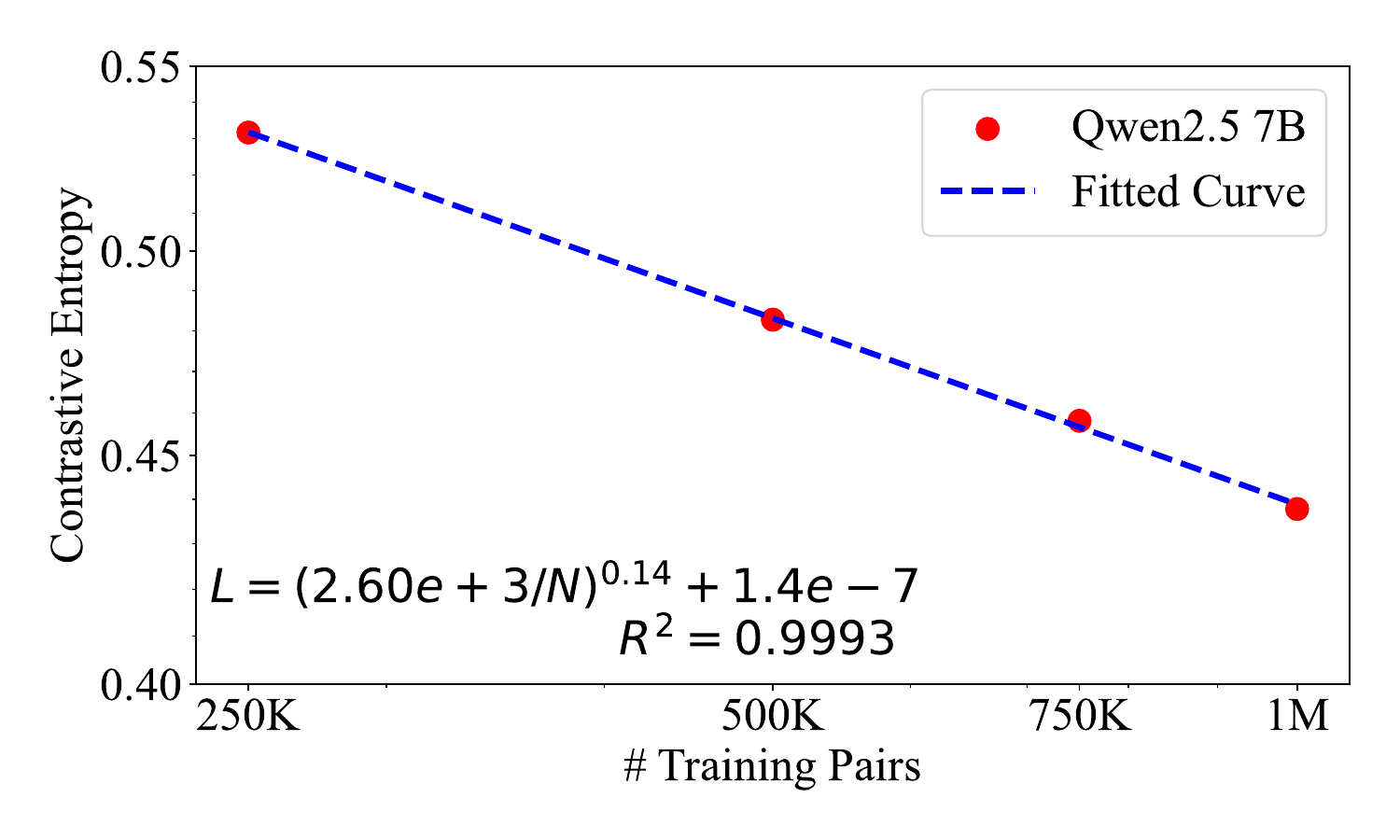}}
    \subfigure[Mixed Scaling Law of Model Size and Data Size.]{
    \includegraphics[width=0.315\textwidth]{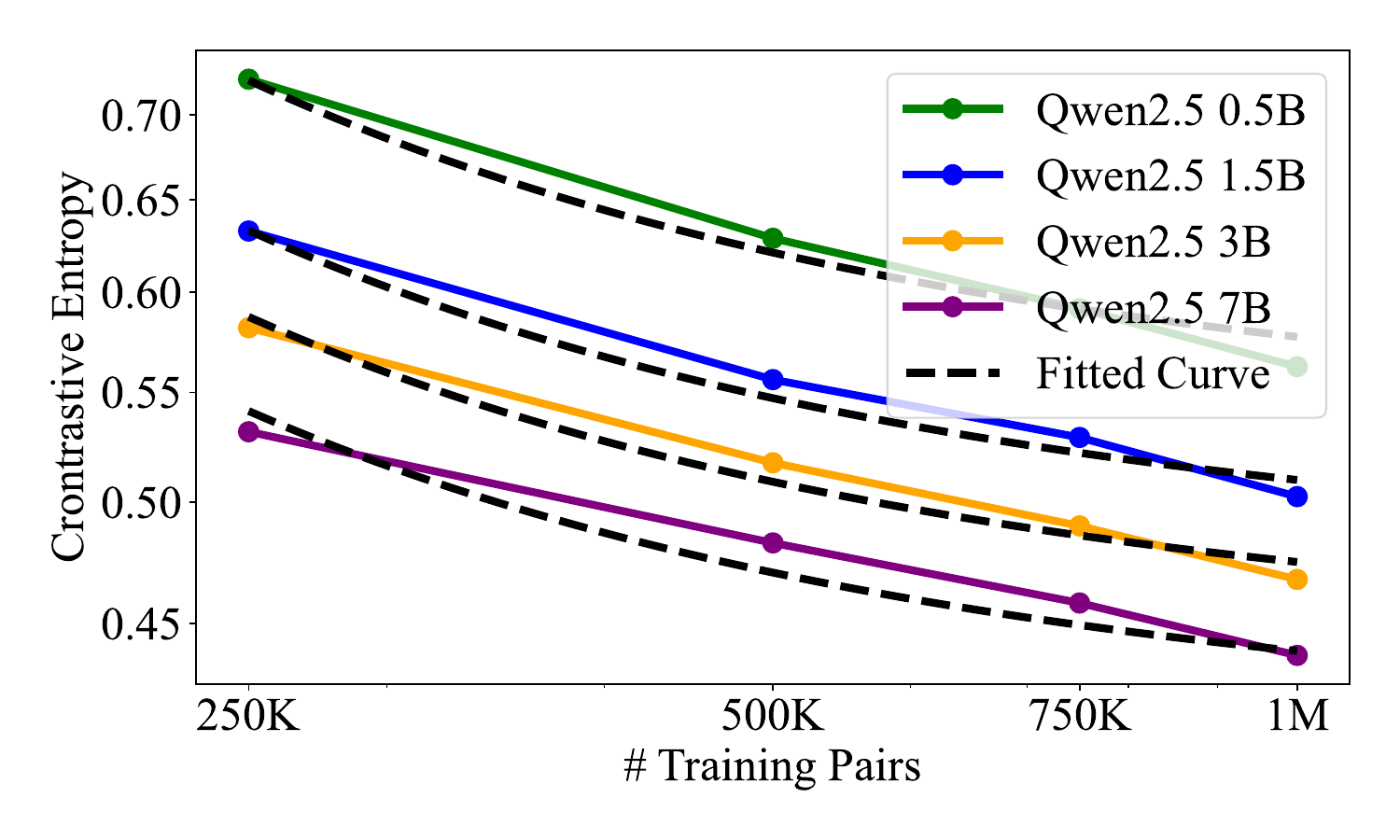}}
    \caption{The scaling laws of the LLM-based dual-tower architecture for real-world dense retrieval on Xiaohongshu. The dots are the practical experimental results. The dashed lines are the fitted curves of the scaling law. The y-axis is the contrastive entropy on the Small validation dataset. (a) The scaling law for model size. (b) The scaling law for data size. (c) The mixed scaling law of model size and data size.}
    \label{fig:scaling}
\end{figure*}

\subsection{Exploring Scaling Law}
Since the performance gap between the second stage and the first stage of \method is relatively small, we focus on exploring the scaling law of dense retrieval in the first stage on the Small training dataset.
We utilize advanced Qwen 2.5 series LLMs~\cite{qwen2.5} as the backbone models.
At the same time, we use real-world industrial scenarios and training pipelines to validate the scaling law.
We evaluate the retriever's ranking ability using contrastive entropy on the validation dataset~\cite{fang2024scaling} as our evaluation metric.
To analyze the scaling behavior across different factors, we fit our experimental data using the scaling law function~\cite{fang2024scaling,kaplan2020scaling,hoffmann2022training}:
$L(x) = (\frac{C}{x})^{\alpha} + \delta$,
where $x$ represents the scaling variable, and $L(x)$ is the contrastive entropy on the validation dataset to be predicted.
$\delta$, $C$ and $\alpha$ are coefficients. 
Besides, there is a coefficient of determination $R^2$ to judge the fit situation.
\subsubsection{Model Size Scaling Law}
The scaling law of model size in our platform is as follows:
\begin{equation}
L(N) = (\frac{3.82\times 10^{5}}{N})^{0.14} + 0.18,
\end{equation}
where $N$ is the number of non-embedding parameters in the retriever.
As shown in Figure~\ref{fig:scaling} (a), we find that the loss decreases as the model size scales in a power-law.

\subsubsection{Data Scaling Law}
To explore the data scaling law, we fix the model size and utilize the Qwen 2.5 $7$B LLM as the backbone of our dense retriever.
As shown in Figure~\ref{fig:scaling} (b), the scaling law of data size in our platform follows:
\begin{equation}
L(D) = (\frac{2.60\times 10^{3}}{D})^{0.14} + 1.4\times 10^{-7},
\end{equation}
where $D$ is the number of training query-document pairs.

\subsubsection{Model Size and Data Scaling Law}
The mixed scaling law of model size and data size considers these two factors simultaneously.
As shown in Figure~\ref{fig:scaling} (c), we utilize the following function to fit the validation loss for different model sizes and data sizes:
\begin{equation} 
L(N,D) = ((\frac{3.15 \times 10^5}{N})^{\frac{0.12}{770}} + \frac{1.16 \times 10^2}{D})^{770} + 0.108,
\label{eq:mix}
\end{equation}
where $R^2$ of this function equals $0.9887$.

\subsubsection{Analysis.} By analyzing Eq.~\ref{eq:mix},  we observe that scaling up data size $D$ is more efficient for improving model performance due to the large exponent.
However, the benefits of data scaling diminish gradually as $D$ approaches $10^8$. 
While increasing the training data could further improve performance, the additional training costs would significantly outweigh the potential benefits.
When the training data size is sufficiently large, scaling up the model size becomes a more effective way to improve performance.
Based on this insight and considering the training and inference costs, we used $10^8$ training pairs to train our $7B$ LLM-based dense retriever.

\begin{table}[!t]
    \centering
    
    \setlength\tabcolsep{3pt}
    \renewcommand\arraystretch{1}
    \caption{Performance of different prompt methods trained on the Small training dataset (\%). ``\textbf{{\Large *}}'' indicates the statistically significant improvements (i.e., two-sided t-test with $p<0.05$) over the best baseline.}
    \scalebox{0.85}{
    \begin{tabular}{l|cccc|cc} 
    \Xhline{1.0pt}
    \textbf{Method}&R@50&R@100&R@500&R@1k&AUC-SAT&AUC-REL \\
    \Xhline{0.5pt}
    \methoddual & \textbf{35.66}* & \textbf{45.21}* & \textbf{66.75}* & \textbf{75.48}* &  \textbf{78.89}* & \textbf{77.14}* \\
    w/o title loss & 25.71 & 38.05 & 61.37 & 71.22 & 75.91 & 74.88\\
    w/o summarize & 33.12 & 43.11 & 65.67 & 74.22 & 77.16 & 76.13 \\
    \Xhline{1.0pt}
    \end{tabular}}
    \label{tab:ablation}
\end{table}
\subsection{Ablation Study}
\subsubsection{Prompt Design.}
To test the efficacy of our proposed prompt, we compare our prompt design with the following variants.
\begin{itemize}[leftmargin=*]
    \item \textbf{w/o title loss} is a variant that does not predict the query terms for titles in the prompt template.
    \item \textbf{w/o summarize} is a variant that does not use an inductive prompt to summarize. Directly use the predicted query term embedding of the content as the final embedding.
\end{itemize}

We show the results in Table~\ref{tab:ablation}.
We observe that the performance of w/o title loss is lower than \methoddual.
This is because w/o title loss can overlook the information in titles, which is key to driving users to click documents.
We also find that w/o summarize can reduce the performance.
The inductive prompt can boost the LLMs to conclude the important information in the document again to reflect previous information, which is beneficial to final representations.

\begin{table}[!t]
    \centering
    
    \setlength\tabcolsep{3.3pt}
    \renewcommand\arraystretch{1.}
    \caption{Performance of student query towers trained with QKD on Large training dataset (\%), all of which are paired with the same 7B document tower of \methoddual trained on Large training dataset. \# Para. is the number of non-embedded parameters of query towers. QPS is tested on an A100 GPU, with an inference batch size of $500$ and a GPU utilization ratio of $100\%$. Train Loss is the mean QKD loss of the final student models tested on the Large training dataset. Valid Loss is tested on the Large validation dataset.}
    \scalebox{0.8}{
    \begin{tabular}{l|ccc|c} 
    \Xhline{1.0pt}
    \textbf{Metric}&RED BERT-1L&RED BERT-4L&RED BERT-12L&\methoddual \\
    \Xhline{0.5pt}
    \# Para. & 8M&29M&86M&7B \\
    QPS & 52,205 &33,810&19,090&408\\
    \Xhline{0.5pt}
    R@50 & 49.56 & 54.57 & 54.91 & 55.15  \\
    R@100 & 61.55 & 66.07 & 66.60 & 66.92  \\
    R@500 & 81.86 & 85.00 & 85.41 & 85.63 \\
    R@1k & 87.74 & 90.02 & 90.25 & 90.62 \\
    \Xhline{0.5pt}
    AUC-SAT&82.03&83.01&83.17&83.71 \\
    AUC-REL&81.22&82.14&82.53&82.91 \\
    \Xhline{0.5pt}
    Train Loss & 2.6e-2& 1.1e-4& 6.0e-3&- \\
    Valid Loss & 4.0e-2& 1.6e-2& 1.0e-2&- \\
    \Xhline{1.0pt}
    \end{tabular}}
    \label{tab:qkd}
\end{table}

\subsubsection{QKD for Varying Sizes of Student Query Tower}
We conduct QKD for different query towers of varying sizes, all of which are paired with the same 7B document tower of \methoddual trained on Large training dataset.
We show the results in Table~\ref{tab:qkd}.
We observe that the performance gap of the student query towers compared to \methoddual decreases as the number of parameters in the query towers increases, though the QPS decreases.
There is a trade-off between improved performance and online inference speed.
Besides, we find that Train Loss and Valid Loss are quite low when distilling any query towers, which is crucial to the theoretical upper bound of the expected risk of \method.

\subsection{Online Experiment}
We conduct week-long online A/B testing experiments on the retrieval system of Xiaohongshu, where the experimental traffic accounts for $5\%$ of the overall traffic.
Compared to the old online system that uses traditional BERT as the semantic interpreter, our new online method incorporating \method significantly enhances the relevance between retrieved documents and queries.
In the top 4 retrieved documents, the number of documents labeled as irrelevant by the ranking model decreases by \(1.165\%\), and the satisfaction level improves by $0.135\%$. In the top 20 documents, the number of documents labeled as irrelevant decreases by \(1.546\%\), and the satisfaction level improves by $0.172\%$.
Besides, the ratio of queries with fewer than 5 documents recalled decreases by $1.037\%$.
All these improvements are statistically significant (t-test with $p<0.05$).

\section{Related Work}
\textbf{Dense Textual Retrieval.}
Dense textual retrieval employs query and document towers to transform queries and documents into embeddings~\cite{zhao2024dense,lu2020twinbert,li2021embedding}.
This approach allows for the decoupling of query and document processing, enabling offline computation of document embeddings and indexing~\cite{liu2021que2search,he2023que2engage,magnani2022semantic}.
Based on pre-trained language models, dense retrieval can comprehend the relevance between queries and documents, making it a classic retrieval method in industrial applications~\cite{zhao2024dense,zhang2022uni,liu2021pre,liu2021que2search,he2023que2engage,magnani2022semantic}. 
Recent research has demonstrated the scaling law of dense retrieval in academic experimental settings~\cite{fang2024scaling,ni2022large}. 
Furthermore, LLMs have been successfully integrated into dense retrieval systems~\cite{ma2024fine,liao2024d2llm,zhu2023large,behnamghader2024llm2vec,li2024llama2vec,tang2024pooling}.
These advancements motivate our exploration of scaling up real-world retrievers in industry settings to enhance the performance of dense retrieval.

\textbf{LLM-enhanced Retrieval.}
LLMs have significantly advanced the field of information retrieval, leading to various changes in its aspects.
For example, LLMs can rewrite user queries to better capture the user's true intent and expand the semantic content of queries~\cite{wang2023query2doc,peng2024large}.
Retrieval-augmented generation utilizes the generative capabilities of LLMs to synthesize query-relevant details from retrieved documents, providing more comprehensive responses to users~\cite{gao2023retrieval,lyu2024crud,lyu2024retrieve}.
Some works leverage LLMs to generate high-quality pseudo training data, enhancing model performance~\cite{wang2023improving}.
Additionally, numerous academic studies have leveraged LLMs as backbones to enhance the understanding of textual content~\cite{liao2024d2llm,zhu2023large,behnamghader2024llm2vec,li2024llama2vec,lee2024nv,ma2024task}.
In our work, we focus on applying LLMs to dense retrieval for real-world applications, taking into account online efficiency and memory usage constraints.

\textbf{Knowledge Distillation for Dense Retrieval.}
Retrieval systems require low online latency to satisfy users' needs. 
To achieve this, knowledge distillation has become a popular technique to compress large or powerful models into smaller and faster ones~\cite{zhang2022uni,hinton2015distillingknowledgeneuralnetwork}.
Cross-encoders, known for their high precision in capturing query-document interaction~\cite{ren2021rocketqav2,qu2021rocketqa}, are often used as teacher models of dual-tower~\cite{ren2021rocketqav2,qu2021rocketqa,asai2023task,kimustad,izacard2021distilling}.
This approach works well in both supervised~\cite{ren2021rocketqav2,liao2024d2llm} and unsupervised scenarios~\cite{qu2021rocketqa,asai2023task}.
Some works~\cite{lin2021batch,hofstatter2021efficiently} have also adopted late-interaction dual-tower models~\cite{khattab2020colbert} to distill naive dual-tower models.
Besides, other works have explored faster online inference by distilling the query tower for better online architecture~\cite{cohen-etal-2024-extremely} or faster inference~\cite{wang2023query,campos2023quick}.
In this work, we utilize the knowledge of the ranking model to generate relevance data for model training. 
To improve performance while maintaining efficiency, we propose a two-stage scaling approach.

\section{Conclusion}
In this work, we explore scaling up the dense retriever system for practical industrial applications.
To exploit the scaling potential of LLMs while maintaining online query latency, we propose a two-stage method, \method.
The first stage is fully scaling up the dual towers, which uncovers the full potential of LLMs for dense retrieval.
The second stage is to conduct QKD to reduce the online latency.
We conduct comprehensive experiments to validate the scaling law of dense retrieval in our specific scenario and demonstrate the efficacy of our proposed framework.
Besides, our model demonstrates significant online performance improvements.

\clearpage
\bibliographystyle{ACM-Reference-Format}
\bibliography{sample-base}

\appendix
\section*{Appendix}

\section{Training Details}
\label{training_details}
\begin{table}[!h]
    \centering
    \setlength\tabcolsep{2mm}
    \renewcommand\arraystretch{0.9}
    \caption{Training hyperparameters for training \method.}
\begin{tabular}{lcc}
\Xhline{1.0pt}
\textbf{Configuration} & \textbf{Stage I} & \textbf{Stage II} \\
\Xhline{0.5pt}

Optimizer & AdamW& AdamW \\
$\beta_1$ & $0.9$ & $0.9$  \\
$\beta_2$ & $0.999$ & $0.999$ \\
$epsilon$ & $1e^{-8}$ & $1e^{-8}$ \\
Max gradient norm & $1.0$ & $1.0$ \\
Weight decay & $1e^{-3}$& $1e^{-3}$ \\
Peak learning rate & $3e^{-6}$ & $3e^{-4}$ \\
Warmup Ratio & $0.1$ & $0.1$\\
Learning rate scheduler & linear decay  & linear decay\\
Numerical precision & bf16& bf16  \\
Global batch size & $480$ & $40,000$ \\
Epoch & 1 & 100 \\

\Xhline{1.0pt}
\end{tabular}
\label{tab:traininghyper}
\end{table}

The detailed training hyperparameters are presented in Table~\ref{tab:traininghyper}.
We use DeepSpeed~\cite{rasley2020deepspeed}, and Zero Redundancy Optimizer (ZeRO)~\cite{ren2021zero} Stage 3 to train our models.
Besides, we provide detailed information on the training models used in our main experiments in Table~\ref{tab:modelhyper}.
More details about the Qwen2.5 series are presented in~\cite{qwen2.5}.

\begin{table}[!h]
    \centering
    \setlength\tabcolsep{2mm}
    \renewcommand\arraystretch{1}
    \caption{Hyperparameters of used models.}
\scalebox{0.85}{
\begin{tabular}{lccc}
\Xhline{1.0pt}
\textbf{Setting} & \textbf{RED Bert-4L} & \textbf{zh Bert-12L} & \textbf{Qwen2.5-Ins. 7B} \\
\Xhline{0.5pt}
\# layers & $4$ & $12$ & $28$ \\
\# attention heads & $12$ & $12$ & $28$ \\
Vocab size & $84,522$ & $21,128$ & $152,064$ \\
hidden size $h_t$ & $768$ & $768$ & $3,584$ \\
intermediate size & $3,072$ & $3,072$ & $18,944$ \\
\Xhline{1.0pt}
\end{tabular}}
\label{tab:modelhyper}
\end{table}

\section{Theoretical Proof}
\label{theory}
Following~\cite{kimustad}, we consider only one contrastive learning loss, based on one embedding generated by encoders, as the overall loss for simplicity.
In our \method, we adopt the softmax-based contrastive learning loss.
We reformulate the functions with relevance label $y_i$ as follows:
\begin{equation}
\ell(\boldsymbol{s}_{q,\mathbb{D}},\boldsymbol{y}) = -\sum_{i=1}^{|\mathbb{D}|} y_i \cdot log(\frac{e^{s_{q,d_i}}}{\sum_{j=1}^{|\mathbb{D}|}e^{s_{q,d_j}}}),
\end{equation}
where $\mathbb{D} = [d_1, ..., d_{|\mathbb{D}|}]$, $\boldsymbol{s}_{q,\mathbb{D}} = [s_{q,d_1}, ..., s_{q,d_{|\mathbb{D}|}}]$, and $\boldsymbol{y}=[y_1,...,y_{|\mathbb{D}|}]$.
$\mathbb{D}$ is the set of candidate documents.
$\boldsymbol{s}_{q,\mathbb{D}}$ is the set of relevance scores for the candidate documents, and $s_{q,d_i}$ is the relevance score between query $q$ and document $d_i$.
$\boldsymbol{y}$ is the set of relevance labels, where $y_i\in\{0,1\}$ indicates the relevance label between query $q$ and document $d_i$. 
It has a common alternative that changes from list-wise loss to point-wise loss~\cite{kimustad}, as follows:
\begin{equation}
\begin{aligned}
\ell(\boldsymbol{s}_{q,\mathbb{D}},\boldsymbol{y})&=\sum_{i=1}^{|\mathbb{D}|}\ell(s_{q,d_i},y_i) \\
& = -\sum_{i=1}^{|\mathbb{D}|}( y_i \cdot log(\frac{1}{1 + e^{-s_{q,d_i}}}) + (1-y_i) \cdot log(\frac{1}{1 + e^{s_{q,d_i}}})).
\end{aligned}
\end{equation}
This point-wise loss can also be rewritten as:
\begin{equation}
\begin{aligned}
\ell(s_{q,d_i},y_i) = (1-y_i)s_{q,d_i} + \gamma(-s_{q,d_i}),
\end{aligned}
\end{equation}
where $\gamma(s) = log (1+e^s)$ is the softplus function.
Subsequent analysis is based on the point-wise loss function to simplify the analysis.
And we simplify the relevance score as dot product $s_{q,d_i} = \boldsymbol{q}^\top \boldsymbol{d}_i$, because cosine similarity is essentially a scaled version of the dot product.
These simplifications maintain the core of our analysis while reducing computational complexity.
Next, we introduce the theorem of uniform deviation bound in ~\cite{kimustad}.

\begin{theorem}
\label{theorem1}
Denote $\mathcal{Q}$ as the class functions of the query tower $Q$, and $\mathcal{D}$ as the class functions of the document tower $D$.
Given the training dataset $\mathbb{S}_n=\{(q_i,d_i,y_i)\}_{i=1}^{n}$, the $\epsilon$-covering number of a function class $\mathcal{H}$ under $L_2(\mathbb{P}_n)$ norm is denoted as $N(\epsilon, \mathcal{H})$, where $\| h \|_{L_2(\mathbb{P}_n)}^2 := \| h \|_n^2 := \frac{1}{n} \sum_{i=1}^{n} \|h(q_i, d_i)\|^2_2$.
Let $\ell$ be the loss function, which is $L_{\ell}$-Lipschitz in its non-target variable.
Assume that the embedding functions in $\mathcal{Q}$ and $\mathcal{D}$ output embeddings with $L_2$ norms at most $K$.
Define the uniform deviation as follows:

\begin{equation}
\begin{aligned}
\mathcal{E}_n(\mathcal{Q},\mathcal{D}) &= \sup_{Q\in\mathcal{Q},D\in\mathcal{D}}\left|\frac{1}{n}\sum_{i\in[n]}\ell(s^{Q,D}_{q_i,d_i},y_i) - \mathbb{E}_{q,d}\ell(s^{Q,D}_{q,d},y)\right| \\
& = \sup_{Q\in\mathcal{Q},D\in\mathcal{D}}\left|R(s_{q,d}^{Q,D};\mathbb{S}_n) - \mathbb{E}_{q,d}\ell(s^{Q,D}_{q,d},y)\right|,
\end{aligned}
\end{equation}
where $R(s_{q,d}^{Q,D};\mathbb{S}_n) = \frac{1}{n}\sum_{i\in[n]}\ell(s^{Q,D}_{q_i,d_i},y_i)$.
Then, we have:
\begin{equation}
\begin{aligned}
    \mathcal{E}_n(\mathcal{Q},\mathcal{D})\leq \mathbb{E}_{\mathbb{S}_n} \frac{48KL_{\ell}}{\sqrt{n}} \int_0^{\infty} \sqrt{\log N(u, \mathcal{Q}) + \log N(u, \mathcal{D})}\ du.
\end{aligned}
\label{eall}
\end{equation}
For a fixed document tower $D^*$, i.e., $\mathcal{D}=\{D^*\}$,
\begin{equation}
\begin{aligned}
    \mathcal{E}_n(\mathcal{Q},\{D^*\})\leq \mathbb{E}_{\mathbb{S}_n} \frac{48KL_{\ell}}{\sqrt{n}} \int_0^{\infty} \sqrt{\log N(u, \mathcal{Q})}\ du.
\end{aligned}
\label{edis}
\end{equation}
\end{theorem}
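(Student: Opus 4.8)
The plan is to bound the uniform deviation $\mathcal{E}_n(\mathcal{Q},\mathcal{D})$ by the standard four-step reduction of the chaining argument in \cite{kimustad}: (i) symmetrization, to pass from the deviation to an empirical Rademacher complexity; (ii) Talagrand's contraction inequality, to strip off the $L_\ell$-Lipschitz loss $\ell$; (iii) Dudley's entropy integral, to control the Rademacher complexity of the score class by its $L_2(\mathbb{P}_n)$-covering numbers; and (iv) a covering-number factorization that exploits the bilinear structure $s^{Q,D}_{q,d}=\boldsymbol{q}^\top\boldsymbol{d}$ together with the norm bound $K$. Tracking the constant through each step — a factor $2$ from symmetrization, $L_\ell$ from contraction, the Dudley constant $12$, and a radius rescaling by $2K$ from step (iv) — yields the prefactor $2\cdot 12\cdot 2K\cdot L_\ell = 48KL_\ell$.

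First I would apply the symmetrization inequality to $\mathbb{E}_{\mathbb{S}_n}\mathcal{E}_n(\mathcal{Q},\mathcal{D})$: introducing i.i.d. Rademacher signs $\sigma_i$ gives $\mathbb{E}_{\mathbb{S}_n}\mathcal{E}_n(\mathcal{Q},\mathcal{D}) \le 2\,\mathbb{E}\sup_{Q,D}\bigl|\tfrac{1}{n}\sum_{i\in[n]}\sigma_i\,\ell(s^{Q,D}_{q_i,d_i},y_i)\bigr|$, i.e. twice the empirical Rademacher complexity of the loss class $\mathcal{L}=\{(q,d,y)\mapsto\ell(s^{Q,D}_{q,d},y)\}$. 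Because the analysis is run on the point-wise scalar loss $\ell(s_{q,d_i},y_i)=(1-y_i)s_{q,d_i}+\gamma(-s_{q,d_i})$, which is $L_\ell$-Lipschitz in its score argument, the scalar contraction lemma removes $\ell$ at the cost of a factor $L_\ell$, leaving the empirical Rademacher complexity $\mathcal{R}_n(\mathcal{S})$ of the score class $\mathcal{S}=\{(q,d)\mapsto\boldsymbol{q}^\top\boldsymbol{d}: Q\in\mathcal{Q},D\in\mathcal{D}\}$.

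Next I would invoke Dudley's chaining bound, $\mathcal{R}_n(\mathcal{S})\le \tfrac{12}{\sqrt{n}}\int_0^\infty\sqrt{\log N(u,\mathcal{S})}\,du$, with covering numbers in the empirical $L_2(\mathbb{P}_n)$ metric, and then establish the covering-number factorization. Using Minkowski's inequality together with the embedding bound $K$, for any two towers I would split $\|s^{Q_1,D_1}-s^{Q_2,D_2}\|_n \le \|s^{Q_1,D_1}-s^{Q_1,D_2}\|_n + \|s^{Q_1,D_2}-s^{Q_2,D_2}\|_n \le K\|D_1-D_2\|_n + K\|Q_1-Q_2\|_n$, so that a $(u/2K)$-net for $\mathcal{Q}$ combined with a $(u/2K)$-net for $\mathcal{D}$ forms a $u$-net for $\mathcal{S}$. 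Hence $\log N(u,\mathcal{S})\le \log N(u/2K,\mathcal{Q})+\log N(u/2K,\mathcal{D})$, and the substitution $v=u/2K$ pulls out the factor $2K$ and turns the integral into $\int_0^\infty\sqrt{\log N(v,\mathcal{Q})+\log N(v,\mathcal{D})}\,dv$; assembling the four constants gives Eq.~\ref{eall}. For Eq.~\ref{edis} I would specialize to $\mathcal{D}=\{D^*\}$: a singleton class has covering number $1$, so $\log N(\cdot,\{D^*\})=0$ and that term vanishes, while the same (valid, if slightly loose) $2K$ rescaling keeps the prefactor $48KL_\ell$.

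I expect the main obstacle to be step (iv): verifying the covering-number factorization rigorously in the empirical $L_2(\mathbb{P}_n)$ geometry, since the inner product is bilinear rather than Lipschitz in a single argument, and confirming that the triangle-inequality splitting produces exactly the additive entropy $\log N(\cdot,\mathcal{Q})+\log N(\cdot,\mathcal{D})$ under a single square root with the clean $2K$ rescaling. Secondary care is needed to check that the Dudley integral converges — its integrand vanishes once $u$ exceeds the diameter $2K^2$ of $\mathcal{S}$ (by Cauchy–Schwarz each score satisfies $|s|\le K^2$), so the effective upper limit is finite — and to justify using the scalar contraction lemma, which is legitimate precisely because the reduction to the point-wise loss makes each summand depend on a single scalar score.
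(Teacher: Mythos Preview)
Your proposal is a correct and standard derivation: symmetrization, contraction, Dudley chaining, and the bilinear covering-number split are exactly the right tools, and your constant bookkeeping $2\cdot L_\ell\cdot 12\cdot 2K=48KL_\ell$ is consistent with the stated bound. The covering factorization in step (iv), which you flag as the delicate point, goes through just as you outline via the pointwise estimate $|\boldsymbol{q}_1^\top\boldsymbol{d}_1-\boldsymbol{q}_2^\top\boldsymbol{d}_2|\le K\|\boldsymbol{d}_1-\boldsymbol{d}_2\|+K\|\boldsymbol{q}_1-\boldsymbol{q}_2\|$ followed by Minkowski in $L_2(\mathbb{P}_n)$.

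However, you should be aware that the paper itself does \emph{not} prove Theorem~\ref{theorem1}: it is merely quoted from \cite{kimustad} as a black box (``we introduce the theorem of uniform deviation bound in~\cite{kimustad}'') and then used to derive Propositions~\ref{pro1} and~\ref{pro2}. So there is no ``paper's own proof'' to compare against for this statement. Your write-up effectively reconstructs what the cited reference presumably contains, and would be perfectly appropriate as a self-contained appendix proof; just note that the paper treats it as an imported result rather than a contribution.
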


Based on Theorem~\ref{theorem1}, we can prove Proposition~\ref{pro1} and Proposition~\ref{pro2}.
\begin{proof}[Proof of Proposition~\ref{pro1}]
Rewrite Eq.~\ref{eall} as follows:
\begin{equation}
\begin{aligned}
    &\sup_{Q_D\in\mathcal{Q},D_D\in\mathcal{D}}\left|R(s_{q,d}^{Q_D,D_D};\mathbb{S}_n) - \mathbb{E}_{q,d}\ell(s^{Q_D,D_D}_{q,d},y)\right|\leq \\& \mathbb{E}_{\mathbb{S}_n} \frac{48KL_{\ell}}{\sqrt{n}} \int_0^{\infty} \sqrt{\log N(u, \mathcal{Q}) + \log N(u, \mathcal{D})}\ du.
\end{aligned}
\label{eall_d}
\end{equation}
To derive the upper bound of the expected risk, we break down Eq.~\ref{eall_d} as follows:
\begin{equation}
\begin{aligned}
    &\mathbb{E}_{q,d}\ell(s^{Q_D,D_D}_{q,d},y)-R(s_{q,d}^{Q_D,D_D};\mathbb{S}_n)\leq \\& \mathbb{E}_{\mathbb{S}_n} \frac{48KL_{\ell}}{\sqrt{n}} \int_0^{\infty} \sqrt{\log N(u, \mathcal{Q}) + \log N(u, \mathcal{D})}\ du.
\end{aligned}
\end{equation}
After transferring $R(s_{q,d}^{Q_D,D_D};\mathbb{S}_n)$, we complete the proof.
\end{proof}

\begin{proof}[Proof of Proposition~\ref{pro2}]
Due to the student query tower $Q_{stu}$ sharing the same document tower $D_R$ with the teacher query tower $Q_R$, the document tower $D_R$ is fixed.
Therefore, rewrite Eq.~\ref{edis} as follows:
\begin{equation}
\begin{aligned}
    &\sup_{Q_{stu}\in\mathcal{Q}}\left|R(s_{q,d}^{Q_{stu},D_R};\mathbb{S}_n) - \mathbb{E}_{q,d}\ell(s^{Q_{stu},D_R}_{q,d},y)\right|\leq \\& \mathbb{E}_{\mathbb{S}_n} \frac{48KL_{\ell}}{\sqrt{n}} \int_0^{\infty} \sqrt{\log N(u, \mathcal{Q})}\ du.
\end{aligned}
\end{equation}
We can derive the upper bound of the expected risk by following Proposition~\ref{pro1}:
\begin{equation}
\begin{aligned}
    \mathbb{E}_{q,d}\ell(s^{Q_{stu},D_R}_{q,d},y)&\leq R(s_{q,d}^{Q_{stu},D_R};\mathbb{S}_n)   \\
    &+ \mathbb{E}_{\mathbb{S}_n} \frac{48KL_{\ell}}{\sqrt{n}} \int_0^{\infty} \sqrt{\log N(u, \mathcal{Q})}\ du.
\end{aligned}
\end{equation}
However, the student query tower $Q_{stu}$ and the document tower $D_R$ are not directly trained on the training dataset $\mathbb{S}_n$.
It is hard to measure $R(s_{q,d}^{Q_{stu},D_R};\mathbb{S}_n)$.
To clearly derive the upper bound, we introduce $R(s_{q,d}^{Q_R,D_R};\mathbb{S}_n)$, which is the empirical risk of jointly optimized teacher query tower and document tower:
\begin{equation}
\begin{aligned}
    \mathbb{E}_{q,d}\ell(s^{Q_{stu},D_R}_{q,d},y) &\leq R(s_{q,d}^{Q_R,D_R};\mathbb{S}_n) + \underbrace{R(s_{q,d}^{Q_{stu},D_R};\mathbb{S}_n) - R(s_{q,d}^{Q_R,D_R};\mathbb{S}_n) }_{\textit{Denoted as }L_{distill}} \\ &  + \mathbb{E}_{\mathbb{S}_n} \frac{48KL_{\ell}}{\sqrt{n}} \int_0^{\infty} \sqrt{\log N(u, \mathcal{Q})}\ du.
\end{aligned}
\label{infer_distill}
\end{equation}
Here, $L_{distill}$ represents the gap in empirical risks between the student query tower and the teacher query tower with respect to the same document tower.
\begin{equation}
\begin{aligned}
    L_{distill} &= R(s_{q,d}^{Q_{stu},D_R};\mathbb{S}_n) - R(s_{q,d}^{Q_R,D_R};\mathbb{S}_n) \\
    &= \frac{1}{n}\sum_{i\in[n]}\ell(s^{Q_{stu},D_R}_{q_i,d_i},y_i) - \frac{1}{n}\sum_{i\in[n]}\ell(s^{Q_R,D_R}_{q_i,d_i},y_i) \\
    &= \frac{1}{n}\sum_{i\in[n]} ((1-y_i)\boldsymbol{q}_i^{stu\top}\boldsymbol{d}_i + \gamma(-\boldsymbol{q}_i^{stu\top}\boldsymbol{d}_i) \\
    &- (1-y_i)\boldsymbol{q}_i^{\top}\boldsymbol{d}_i - \gamma(-\boldsymbol{q}_i^{\top}\boldsymbol{d}_i) ) \\
    &\overset{(a)}\leq \frac{1}{n}\sum_{i\in[n]} ((1-y_i)\boldsymbol{d}_i^\top(\boldsymbol{q}_i^{stu}-\boldsymbol{q}_i)+\left| \boldsymbol{q}_i^{stu\top}\boldsymbol{d}_i - \boldsymbol{q}_i^{\top}\boldsymbol{d}_i \right|) \\
    &\overset{(b)}\leq \frac{1}{n}\sum_{i\in[n]} ((1-y_i)\|\boldsymbol{d}_i\|\|\boldsymbol{q}_i^{stu}-\boldsymbol{q}_i\|+\| \boldsymbol{d}_i \|\| \boldsymbol{q}_i^{stu}-\boldsymbol{q}_i \| ) \\
    &\leq \frac{K}{n}\sum_{i\in[n]} ((2-y_i)\|\boldsymbol{q}_i^{stu}-\boldsymbol{q}_i\| ) \\
    &\leq \frac{2K}{n}\sum_{i\in[n]} \|\boldsymbol{q}_i^{stu}-\boldsymbol{q}_i\|, \\
\end{aligned}
\label{l_distill}
\end{equation}
where $(a)$ is because $\gamma$ is a Lipschitz continuous function, and $(b)$ is Cauchy-Schwarz inequality.
And, $\boldsymbol{q}_i^{stu}$ and $\boldsymbol{q}_i$ are embeddings generated by the student query tower $Q_{stu}$ and the teacher query tower $Q_R$ for query $q_i$, and $\boldsymbol{d}_i$ is the embedding produced by the document tower $D_R$ for document $d_i$.
Merging Eq.~\ref{infer_distill} and~\ref{l_distill} can finish the proof.
\end{proof}

\section{Case Study}
In this section, we present cases comparing the Online Baseline and \method, as shown in Table~\ref{tab:case}.
We observe three main distinctions.
Firstly, \method shows greater robustness to typos.
In the first case, it correctly identifies "Peony" despite the "Peany" typo, while the Online Baseline fails.
Secondly, scaling retrievers with LLMs improves accuracy in world knowledge. 
The second case shows \method correctly identifying "Mountain Climbing Mouse" as a brand, while the Online Baseline misinterprets it.
Lastly, \method better comprehends complicated queries. 
In the third case, it grasps the concept of how to send gifts to customers, while the Online Baseline focuses solely on "Zongzi".

\begin{table*}[!h]
    \centering
    
    \setlength\tabcolsep{9pt}
    \renewcommand\arraystretch{0.9}
    \caption{Case study comparing Online Baseline and \method. Blue text indicates clues related to the query, while red text indicates incorrect clues.}
    \scalebox{1}{
    \begin{tabular}{|p{3cm}|p{6cm}|p{6cm}|} 
    \Xhline{1.0pt}
    \textbf{Query (Explanation)}&\textbf{Online Baseline}&\textbf{\method} \\
    \Xhline{1.0pt}
    \multirow{2}{3cm}{\textcolor{blue}{Peany flower} painting. (There is a typo in this query. The word "Peany" should be spelled "Peony".)}&(\textbf{Rank 1st}) \textit{Title}: \textcolor{red}{Gesang flower}. \textit{Content}: Watercolor Flowers. \textit{Topic}: Painted Watercolor, \textcolor{red}{Watercolor Flowers}, \textcolor{red}{Gesang Flowers}. &(\textbf{Rank 1st}) \textit{Title}: \textcolor{blue}{Peony} Collection is here. \textit{Content}: Drawing Tool: iPad. Drawing Software: Procreate. Line Art and Coloring Brushes: Studio Brushes. \\
    \cline{2-3}
    ~&(\textbf{Rank 5th}) \textit{Title}: Draw two \textcolor{red}{roses}. \textit{Content}: ... I bought a box of rosa defective watercolor blind box. The colors are really practical ... Every time I paint the first flower, it kind of goes wrong. \textit{Topic}: \textcolor{red}{Watercolor Flowers}, Freestyle Watercolor, Watercolor Plants ... &(\textbf{Rank 5th}) \textit{Title}: Beginner Illustration Tutorial | New Chinese Style Illustration | How to \textcolor{blue}{Draw Peonies}. \textit{Content}: The \textcolor{blue}{peony} is known as the king of flowers, with unparalleled beauty ... Drawing Process: 1. Sketch: ... 2. Adjust Colors + Base Colors: ... 3. Shaping: ... Tools: iPad + Apple Pencil 2nd Gen + Procreate \textit{Topic}: Inspiration Notes, New Chinese Style, \textcolor{blue}{Draw Peony}.\\
    \Xhline{1.0pt}
    \multirow{2}{3cm}{Hong Kong \textcolor{blue}{Mountain Climbing Mouse}. ("Mountain Climbing Mouse" is the English translation of the Swedish outdoor clothing brand "Klättermusen".)}&(\textbf{Rank 1st}) \textit{Title}: Hong Kong Island's super niche parent-child adventure route - \textcolor{red}{mountain climbing}, waterfall fishing. \textit{Content}: This hidden hiking route is always on my list of places to take my kids out. First of all, it is very close to home ... In addition, although the stone road seems difficult to \textcolor{red}{climb}, it is actually very flat ... \textit{Topic}: Good Places to Take Kids Out in Hong Kong, Parent-Child Activities, \textcolor{red}{Mountain Climbing}, Travel. &(\textbf{Rank 1st})\textit{Title}: Back Nature on the 16th floor of Gala Place. \textit{Content}: If you like outdoor clothes, you must go to Back Nature on the 16th floor ... The bloggers' recommendations were indeed not misleading. They have \textcolor{blue}{Mountain Climbing Mouse}, Arc'teryx, Montbell, and Patagonia... \textit{Topic}: Hong Kong Travel, Hong Kong Shopping Sharing, Arc'teryx, \textcolor{blue}{Mountain Climbing Mouse}. \\
    \cline{2-3}
   ~&(\textbf{Rank 5th}) \textit{Title}: \textcolor{blue}{Mountain Climbing Mouse} Giant Island 30\% off. \textit{Content}: Website discount \textcolor{blue}{Mountain Climbing Mouse} 30\% off, women's sizes XXS, XS, S, Denmark direct mail, including tax. \textit{Topic}: \textcolor{blue}{klattermusen Mountain Climbing Mouse}, \textcolor{blue}{Mountain Climbing Mouse}. &(\textbf{Rank 5th}) \textit{Content}: Backpack. Gilling Backpack 26L 10292. Chinese name, Approximate weight of product: Backpack 26L (Frost Giant Gilling) 750g ... Very large U-shaped opening design, more convenient for taking and putting items. ... \textit{Topic}: Outdoor Sports, \textcolor{blue}{Mountain Climbing Mouse}, Backpack. \\
    \Xhline{1.0pt}
    \multirow{2}{3cm}{How to \textcolor{blue}{send Zongzi to customers}? ("Zongzi" is a traditional Chinese food. It is especially popular during the Dragon Boat Festival.)}&(\textbf{Rank 1st}) \textit{Title}: Happy \textcolor{red}{Dragon Boat Festival}. \textit{Topic}: \textcolor{red}{Happy Dragon Boat Festival}, \textcolor{red}{Dragon Boat Festival Zongzi}, Souvenirs. ... &(\textbf{Rank 1st})\textit{Title}: Workers, look here! To \textcolor{blue}{maintain customer relationships}, just bring it with you. \textit{Content}: \textcolor{blue}{Maintaining customer relationships} is something workers need to learn! How can you \textcolor{blue}{send big gifts with a small budget}. ... \textit{Topic}: \textcolor{blue}{Dragon Boat Festival gift}, Give gifts to customers, Gifts for leaders. ... \\
    \cline{2-3}
   ~&(\textbf{Rank 5th}) \textit{Content}: The best greetings are delivered first. \textcolor{red}{Dragon Boat Festival} is coming soon. Have you chosen your \textcolor{red}{Zongzi gift box}. \textit{Topic}: Dragon Boat Festival gift box, \textcolor{red}{Zongzi}... &(\textbf{Rank 5th}) \textit{Title}: 11 unspoken rules to remember when \textcolor{blue}{giving gifts during the Dragon Boat Festival}. \textit{Content}: 11 unspoken rules for giving gifts during the Dragon Boat Festival. The \textcolor{blue}{Dragon Boat Festival} is coming soon. Hurry up and prepare! \textit{Topic}: Dragon Boat Festival, \textcolor{blue}{Dragon Boat Festival gifts}, \textcolor{blue}{Giving gifts strategy}. \\
   \Xhline{1.0pt}
    \end{tabular}}
    \label{tab:case}
\end{table*}

\end{document}